\def\clap#1{\hbox to 0pt{\hss#1\hss}}
\def\mathrlap{\mathpalette\mathrlapinternal}
\def\mathrlapinternal#1#2{%
\rlap{$\mathsurround=0pt#1{#2}$}}
\newcommand{\calS}{\mathcal{S}}
\newcommand{\Oh}{\mathcal{O}}
\newcommand{\st}{\mid}
\newcommand{\floor}[1]{\left\lfloor #1 \right\rfloor}
\newcommand{\ceil}[1]{\left\lceil #1 \right\rceil}
\newcommand{\abs}[1]{\left| #1 \right|}
\newcommand{\set}[2]{\{ #1 \st #2 \}}
\newcommand{\comment}[1]{}
\newtheorem{lemma}{Lemma}{\bfseries}{\itshape}
\newtheorem{definition}{Definition}{\bfseries}{\itshape}
\newtheorem{corollary}{Corollary}{\bfseries}{\normalfont}
\newtheorem{theorem}{Theorem}
\DeclareMathOperator*{\memdim}{memd}
\DeclareMathOperator*{\diam}{diam}
\DeclareMathOperator*{\lnln}{lnln}
\DeclareMathOperator*{\loglog}{loglog}
\DeclareMathOperator*{\dist}{dist}
\DeclareMathOperator*{\cdist}{d}
\DeclareMathOperator*{\cat}{cat}
\title{Remarks on Category-Based Routing
in Social Networks}
\author{Karl Bringmann}
\author{Kurt Mehlhorn}
\author{Adrian Neumann}
\address{Max-Planck-Institut f\"{u}r Informatik, Campus E14, 66123 Saarbr\"{u}cken, Germany}
\begin{document}
\begin{frontmatter}

\begin{abstract}
	 It is well known that individuals can route messages on short
paths through social networks, given only simple information about the target
and using only local knowledge about the topology. Sociologists conjecture that people find routes greedily by passing the message to an acquaintance that has
more in common with the target than themselves,
e.\,g.\ if a dentist in Saarbr\"{u}cken wants to send a message to a specific lawyer in Munich, he may forward it to someone
who is a lawyer and/or lives in Munich. Modelling
this setting, Eppstein et~al.\ introduced the notion of \emph{category-based
routing}. The goal is to assign a set of categories to each node of a graph
such that greedy routing is possible. By proving bounds on the number of categories a node has to be in we can argue about the plausibility of the underlying sociological model. In this paper we substantially improve the upper bounds introduced by Eppstein et~al.\ and prove new lower bounds.
\end{abstract}
\begin{keyword}
greedy routing\sep social networks
\end{keyword}

\end{frontmatter}

\section{Introduction}

In the 1960's, Milgram~\cite{Milgram_1967, Milgram_1969, Korte_Milgram_1970} observed the \emph{small world phenomenon}, i.\,e., that short paths  seem to connect us all in the social graph in which the vertices represent persons and two persons are connected by an edge if and only if they know each other. He performed experiments in which he asked randomly selected participants to relay letters across the USA by passing them to one of their direct acquaintances.

The participants only had simple categorical information about the target, such as  name, location, and occupation, and knowledge about their own acquaintances, that is, they knew the local topology of the network. The  messages arrived typically after only six hops.

Perhaps even more surprising than the mere existence of short paths in social networks is the apparent ease with which humans can discover them, despite having only little information. Experiments by sociologists~\cite{social_1988, social_1977, social_1974, social_2002} indicate that we use a simple greedy heuristic to route. The message is passed on to the acquaintance that is most similar to the target, where two persons are similar if they share common characteristics, e.\,g.\ they have the same job or live in the same city.

Graph-theoretic models enable us to check the plausibility of these theories. If it turns out that greedy routing needs strong assumptions about the underlying social structure or requires a complex distance function this would be evidence that the actual mechanism humans use is different. 

In~\cite{orig_paper}, Eppstein et~al.\ model this setting as a connected graph $G=(V,E)$,
where people act as vertices and whose edges represent pairwise acquaintance,
together with a system of categories $\calS \subseteq \mathcal{P}(V)$. Each category $C \in \calS$ is the vertex set of a connected subgraph of $G$. For two
vertices $s$ and $t$, let the distance from $s$ to $t$ be the number of
categories containing $t$ but not $s$, i.\,e.,
\[ 
	d(s,t) = \abs{\set{C \in \calS}{t \in C \text{ and } s \not\in C}}. 
\]
A system of categories $\calS$ supports \emph{greedy routing} in $G$ (is
\emph{good} for $G$) if for any
two vertices $s$ and $t$, there is a neighbor $u$ of $s$ with $d(u,t) <
d(s,t)$. 
For a system of categories $\calS$, Eppstein et~al.\ define its
\emph{membership dimension} as the maximal number of categories to which any
vertex belongs:
\[
 \memdim(\calS) := \max_{v\in V} \abs{\{C \in \calS \st v \in C\}}.
\]
The goal is then to (constructively) show the existence of a system of categories that supports greedy routing and has a small
membership dimension. Membership dimension captures the cognitive load of the
participants, i.\,e., the number of categories an actor must keep track of in
order to decide on the next node of the route. If the required membership dimension is too high, humans likely use a different method to find routes.

Eppstein et~al.~\cite{orig_paper} show
the existence of a good system of categories $\calS$ with
\[
	\memdim(\calS) \in \Oh\left((\diam(G) + \log \abs{V})^2\right)
\]
 and
note a lower bound of $\diam(G)$; here $\diam(G)$ denotes
the diameter of $G$. 

We substantially improve
on the upper bound stated above and establish new 
lower bounds. We review related work in
Sect.~\ref{sec:related_work}, introduce notation in
Sect.~\ref{sec:notation}, and prove exact bounds for lines, grids, and tori
in Sect.~\ref{sec:simple_bounds}. In
Sect.~\ref{sec:improved_upper_bound}, we construct for every graph $G$ a good system of categories
$\calS$ with
\[
	\memdim(\calS) = \Oh\left(\diam(G) \cdot \log\left(\frac{2\abs{V}}{\diam(G)}\right)\right).
\]
This bound improves upon the bound of Eppstein et~al.\ except for $\diam(G) = \Theta(\log \abs{V})$. In Sect.~\ref{sec:stars_and_trees} we show an almost matching lower bound. 
We exhibit for all nonnegative integers $n$ and $d$ a
graph with $1 + nd$ vertices and diameter $2d$ for which every good system of
categories has membership dimension
\[
	\Omega\left(\frac{d \ln (\abs{V}\!/d)}{\ln \left(d \ln (\abs{V}\!/d) \right)}\right).
\]
In Sect.~\ref{sec:general_lower_bound} we show that every 
good system $\calS$ for a graph $G$ with average degree $\delta$
has membership dimension
\[
    \Omega\left(\diam(G) +\frac{\log \abs{V}}{\log \delta}\right),
\]
in particular, bounded degree graphs require logarithmic membership dimension. The bound is best possible. For each triple 
$(n,\delta,\diam)$ of positive reals with $1 \le \delta, \diam \le n$ we exhibit a graph $G$ with
$\Theta(n)$ vertices, average degree $\Theta(\delta)$, and diameter $\Theta(\diam)$ for
which a good system of membership dimension
\[
         \Oh\left(\diam(G) +\frac{\log \abs{V}}{\log \delta}\right).
\]
exists.

\section{Related Work}\label{sec:related_work}

Greedy Routing is a well-studied technique with many applications in computer science. A variety of methods is known; for example geographical information as an aid for routing has been explored in~\cite{Finn, Kranakis99compassrouting}. This method does not succeed on all networks, so a number of enhancements have been developed ~\cite{springerlink:10.1023/A:1012319418150, Karp:2000:GGP:345910.345953, Kuhn:2003:GAR:872035.872044}. Several groups also examine \emph{succinct} greedy-routing strategies that limit the additional information at every vertex to be logarithmic in the size of the network~\cite{springerlink:10.1007/978-3-642-00219-9-3, springerlink:10.1007/978-3-642-10631-6-79, Maymounkov06greedyembeddings, 4151823}.

Category-based routing can be seen as a special case of succint greedy routing. One method similar to our framework assigns each node a point in a metric space of low dimension. Each node passes a message to the neighbour with the lowest distance to the target. In this setting very good bounds can be achieved. For example Flury et~al.~\cite{flury} give a construction that not only requires only polylogarithmic dimensionality, but also provides logarithmic bounds on the stretch, i.e.\ the factor by which the greedy routes are longer than the shortest paths.

However, our model is more restrictive as we don't consider routing between computers but want to investigate a natural mechanism for message passing between humans. We adopt the model of~\cite{orig_paper}, who were the first to study category-based routing from complexity-theoretic point of view. We reviewed their results in the introduction.

A different approach to routing in social networks was studied by Kleinberg~\cite{4215803}. He focuses on location instead of categorical information to explain how we find short routes efficiently. Based on his insights he constructs a random graph model that has similar properties to real world networks and shows for which parameters routing is possible. In contrast to this, the approach of Eppstein et~al.\ seeks to construct a system of categories that enables greedy routing for a given network.

The problem has been investigated in the social sciences. There have a number of experimental studies, for example Killworth and Bernard~\cite{social_1977} show that humans use categorical information, foremost location and occupation, for finding routes. Dodds et~al.~\cite{social_2003} show that professional relationships are important for deciding on the next hop. There have also been efforts to model the category structure of social networks, resulting in even more restrictive models than that of Eppstein et~al. For example Watts et~al.~\cite{social_2002} define a model for social networks in which nodes are grouped in a hierarchy of categories in which each category contains only a small number of nodes. As humans are more likely to befriend persons similar to themselves, the probability that a connection exists between two nodes depends on their similarity. They define a complicated distance function that models the way humans judge similarity and show experimentally that greedy routing succeeds over a wide range of the tunable parameters in their model.

\section{Preliminaries}\label{sec:notation}

Unless stated otherwise, we consider undirected connected graphs $G=(V,E)$ with $n=\abs{V}$
nodes. For two nodes $u\in V$, $v\in V$ let $\dist(u,v)$
be the number of edges on a shortest path between $u$ and $v$. Then define the
diameter of a graph $G$ as 
\[
 \diam(G) := \max_{u\in V,v\in V} \dist(u,v),
\]
that is, the diameter is the length of a longest shortest path in $G$.

Let $\calS\subseteq \mathcal{P}(V)$ be a system of subsets of the vertices
of
$G$ that induce connected subgraphs. For a node $u\in U$ define $\cat(u)$ to be the set of
categories to which $u$ belongs,
\[
 \cat(u) :=  \{ C\in \calS \st u\in C \} .
\]
The membership dimension $\memdim(\calS)$ is the maximum number of categories to which any node belongs, that is,
\[
 \memdim(\calS) := \max_{u \in V} \abs{\cat(u)}.
\]

If the message addressed to node $t\in V$ is currently in node $u\in V$ and $u\neq t$, the algorithm forwards it to a neighbor $v\in N(u)$ of $u$ that is closer to $t$ according to the distance function
\[
 \cdist(v,t) := \abs{\cat(t) \backslash \cat(v)},
\]
The algorithm succeeds if for all $u, t\in V$, $u \ne t$, there is a neighbor $v\in N(u)$ such that
$\cdist(v,t) <  \cdist(u,t)$. We say a system of categories supports greedy routing in $G$ or is \emph{good} for $G$ if a greedy routing algorithm succeeds.

\section{Simple Bounds}\label{sec:simple_bounds}

Already in Eppstein et~al.~\cite{orig_paper}, it is observed that the diameter bounds the membership dimension from below.
\begin{lemma}\label{lem:path_lower_bound}
 For any graph $G$ and good system of categories $\calS$ we have
\[ \memdim(\calS) \geq \diam(G).\]
\end{lemma}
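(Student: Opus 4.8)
The plan is to show that the category distance $\cdist$ is always an upper bound for graph distance, i.e.\ $\cdist(s,t) \ge \dist(s,t)$ for every pair of vertices $s,t$, and then to apply this to a pair realizing the diameter. Since $\cdist(s,t) = \abs{\cat(t) \backslash \cat(s)} \le \abs{\cat(t)} \le \memdim(\calS)$, the inequality $\cdist(s,t) \ge \diam(G)$ for a diameter-realizing pair immediately yields the claim.

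The heart of the argument is the inequality $\cdist(s,t) \ge \dist(s,t)$, which I would establish by tracing a greedy route from $s$ to $t$. Fix $t$ and start the message at $s$. Because $\calS$ is good, as long as the message sits at some $u \ne t$ there is a neighbor $v$ of $u$ with $\cdist(v,t) < \cdist(u,t)$; forwarding the message to such a $v$ produces a walk $s = u_0, u_1, u_2, \dots$ along which the quantity $\cdist(u_i, t)$ strictly decreases. Since these values are nonnegative integers and $\cdist(t,t) = 0$, the walk must reach $t$ after at most $\cdist(s,t)$ steps. Let $k$ be the actual number of steps taken to arrive at $t$; then $k \le \cdist(s,t)$ because each step drops $\cdist(\cdot,t)$ by at least one.

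On the other hand, this walk is a sequence of moves along edges of $G$ from $s$ to $t$, so its length $k$ is at least the graph distance: $k \ge \dist(s,t)$. Combining the two bounds gives $\dist(s,t) \le k \le \cdist(s,t)$, as desired. To finish, I would choose $s,t \in V$ with $\dist(s,t) = \diam(G)$, which exist by the definition of diameter, and conclude $\memdim(\calS) \ge \cdist(s,t) \ge \dist(s,t) = \diam(G)$.

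I do not expect a serious obstacle here; the only point requiring a little care is the claim that $\cdist(u_i,t)$ strictly decreases at \emph{every} step rather than merely not increasing. This is exactly what the definition of a \emph{good} system guarantees — greedy routing succeeds means each intermediate node has a strictly closer neighbor — so the strict decrease is available at every vertex distinct from $t$, and the termination-in-at-most-$\cdist(s,t)$-steps bound follows. The argument uses nothing beyond the definitions of $\cdist$, goodness, and $\diam$ given in the preliminaries.
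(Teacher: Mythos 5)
Your proposal is correct and follows essentially the same argument as the paper: both trace the greedy route from a diameter-realizing pair $s,t$, observe that $\cdist(\cdot,t)$ drops by at least one per edge so the route length is sandwiched between $\dist(s,t)$ and $\cdist(s,t)$, and conclude $\memdim(\calS) \ge \abs{\cat(t)} \ge \cdist(s,t) \ge \diam(G)$. Your version is slightly more careful about why the greedy walk must terminate at $t$ (which the paper takes for granted), but this is a refinement of the same proof, not a different one.
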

\begin{proof}
 Let $s\in V(G)$, $t \in V(G)$ be a pair of vertices such that $\dist(s,t)=\diam(G)$. Consider the path $P$ that a message from $s$ to $t$ takes according to the greedy routing system. Note that $\abs{P} \geq \diam(G)$. For every edge $(u,v)\in P$ we have $\cdist(v,t)\leq \cdist(u,t)-1$,
and hence 
\[
	\cdist(s,t) \geq \abs{P} \geq \diam(G).
\]
 By the definition of $\cdist(\cdot)$, this is only possible if $|\cat(t)| \geq \diam(G)$.
\end{proof}

For paths this bound is tight, cf.\ the construction in
Fig.~\ref{fig:path_categories}. We can extend this observation to all graphs
that can be obtained from paths and cycles by taking cross products. 
\begin{figure}[t]
 \centering
 \includegraphics[width=0.5\linewidth]{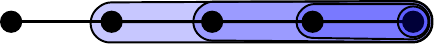}
 \caption{Minimal categories for routing from the leftmost vertex to the
rightmost. Adding the symmetric categories allows to route between any pair of
vertices. Every vertex is contained in exactly four categories.}
 \label{fig:path_categories}
\end{figure}

\begin{definition} Let $G=(V,E)$, and $H=(V',E')$, be two graphs. Then the \emph{cross product} $G\times H$ is the graph $(\tilde V, \tilde E)$ with
 \begin{align*}
	\tilde V &= V \times V',\\
	\tilde E &= \{\{(u,x),(v,y)\} \st (\{u,v\}\in E\phantom{'} \wedge x=y) \vee (\{x,y\} \in E' \wedge u=v)\}).
\end{align*}
\end{definition}

\begin{lemma}\label{lem:cross_product}
 Let $M(G)$ be the minimal membership dimension needed to route in $G$. Then 
\[
	M(G\times H) \le M(G)+M(H).
\]
\end{lemma}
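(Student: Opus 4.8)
The plan is to build a good system for $G\times H$ by lifting good systems for the two factors and taking their union. Let $\calS_G$ and $\calS_H$ be good systems for $G=(V,E)$ and $H=(V',E')$ realizing $\memdim(\calS_G)=M(G)$ and $\memdim(\calS_H)=M(H)$. For each category $C\in\calS_G$ I form the cylinder $C\times V'\subseteq \tilde V$, and for each $D\in\calS_H$ the cylinder $V\times D$. The candidate system is
\[
 \tilde\calS := \set{C\times V'}{C\in\calS_G} \cup \set{V\times D}{D\in\calS_H}.
\]

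Two structural properties must be checked before turning to routing. First, every cylinder induces a connected subgraph: two vertices $(u,x)$ and $(u',x')$ of $C\times V'$ are joined by first moving along a path in $C$ inside the copy $V\times\{x\}$ (such a path exists since $C$ is connected in $G$) and then along a path in $V'$ inside the copy $\{u'\}\times V'$ (which exists since $H$ is connected); the argument for $V\times D$ is symmetric. Second, a vertex $(u,x)$ lies in $C\times V'$ exactly when $u\in C$, and in $V\times D$ exactly when $x\in D$, so $\abs{\cat_G(u)}+\abs{\cat_H(x)}$ counts its categories in $\tilde\calS$, and taking the maximum over vertices yields $\memdim(\tilde\calS)\le M(G)+M(H)$.

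The heart of the argument is that the distance induced by $\tilde\calS$ splits as a sum over the two cylinder families. A cylinder $C\times V'$ contains $t=(v,y)$ but not $s=(u,x)$ precisely when $v\in C$ and $u\notin C$, and a cylinder $V\times D$ contains $t$ but not $s$ precisely when $y\in D$ and $x\notin D$. Summing the two contributions gives
\[
 \cdist\big((u,x),(v,y)\big) = \cdist_G(u,v) + \cdist_H(x,y),
\]
where $\cdist_G$ and $\cdist_H$ denote the distance functions of $\calS_G$ and $\calS_H$. Greedy routing then follows: given $s=(u,x)\neq t=(v,y)$, at least one of $u\neq v$ or $x\neq y$ holds. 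If $u\neq v$, goodness of $\calS_G$ supplies a neighbor $u'$ of $u$ in $G$ with $\cdist_G(u',v)<\cdist_G(u,v)$; then $(u',x)$ is a neighbor of $s$ in $G\times H$, and the decomposition shows $\cdist((u',x),t)<\cdist(s,t)$. If instead $u=v$ and $x\neq y$, the symmetric step through $\calS_H$ moves to $(u,x')$ and again strictly decreases the distance. Hence $\tilde\calS$ is good for $G\times H$ and $M(G\times H)\le\memdim(\tilde\calS)\le M(G)+M(H)$.

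The only points needing care are the connectivity of the cylinders and the clean splitting of $\cdist$; both rest on the standing assumption that $G$ and $H$ are connected and on the fact that the two cylinder families interact with disjoint coordinates of a vertex. Since the claim is an upper bound, no optimality of $\tilde\calS$ is required, so I expect no genuine obstacle beyond verifying these two routine facts.
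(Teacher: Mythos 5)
Your proposal is correct and uses exactly the construction in the paper: lift each category of $G$ to a cylinder $C\times V'$ and each category of $H$ to $V\times D$, then route by decreasing the $G$-coordinate distance first and the $H$-coordinate distance afterward. You simply spell out the details the paper leaves implicit (connectivity of the cylinders and the additive splitting of $\cdist$), which is a welcome but not essentially different elaboration.
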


\begin{proof}
 Let $C_G$ be a minimal category system for $G$ and let $C_H$ be a minimal
category system for $H$. We construct a system of membership dimension
$M(G) + M(H)$ for $G \times H$. For every $c
\in C_G$, we add $c \times V'$ to the system and for every $c' \in C_H$, we add
$V \times c'$ to the system.

We route from $(u,x)$ to $(v,y)$ by initially using the first kind of
categories 
to route to $(v,x)$ and then using the second kind of categories to route to $(v,y)$.
\end{proof}

Lemmas \ref{lem:path_lower_bound} and \ref{lem:cross_product} immediately give the following tight bounds.

\begin{corollary}
 For grid graphs $G$ we get $M(G) = \diam(G)$. \qed
\end{corollary}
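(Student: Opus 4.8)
The plan is to sandwich $M(G)$ between $\diam(G)$ and $\diam(G)$ using the two lemmas just proved. The lower bound $M(G) \geq \diam(G)$ is immediate from Lemma~\ref{lem:path_lower_bound}, which applies to every graph, so essentially all of the work lies in the matching upper bound.

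For the upper bound I would first fix the meaning of \emph{grid graph} as an iterated cross product of paths, $G = P_{n_1} \times \cdots \times P_{n_k}$, where $P_n$ denotes the path on $n$ vertices with $\diam(P_n) = n-1$. The first ingredient is that a single path already attains the bound: taking as categories all \emph{suffix} intervals $\{v_j,v_{j+1},\dots,v_n\}$ together with all \emph{prefix} intervals $\{v_1,\dots,v_j\}$ of the path (the construction sketched in Fig.~\ref{fig:path_categories}) yields a good system in which, by a direct count, every vertex lies in exactly $n-1$ categories. Hence $M(P_n) \leq \diam(P_n)$, and with Lemma~\ref{lem:path_lower_bound} in fact $M(P_n) = \diam(P_n)$. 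I would state and verify this as the base case.

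Next I would apply Lemma~\ref{lem:cross_product} by induction on the number $k$ of factors to obtain
\[
	M(G) \;\leq\; \sum_{i=1}^{k} M(P_{n_i}) \;=\; \sum_{i=1}^{k} \diam(P_{n_i}).
\]
To close the argument I need the standard fact that the diameter is additive under the cross product, $\diam(G \times H) = \diam(G) + \diam(H)$: by the definition of $\times$, any path between $(u,x)$ and $(v,y)$ must adjust the two coordinates independently, so a shortest one has length $\dist_G(u,v) + \dist_H(x,y)$, and maximizing over endpoints gives additivity. Iterating yields $\sum_i \diam(P_{n_i}) = \diam(G)$, whence $M(G) \leq \diam(G)$, and combined with the lower bound we conclude $M(G) = \diam(G)$.

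The only step that requires genuine care, as opposed to bookkeeping, is checking that the prefix/suffix path system really is good and has membership dimension exactly $\diam(P_n)$ \emph{for all} $n$, since the figure depicts only one instance; the additivity of the diameter and the induction over cross-product factors are routine. I therefore expect the base case to be the main obstacle and would write out the greedy condition for the path explicitly before invoking Lemma~\ref{lem:cross_product}.
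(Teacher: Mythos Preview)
Your proposal is correct and follows exactly the route the paper intends: the corollary is stated with a bare \qed because it is meant to be immediate from Lemma~\ref{lem:path_lower_bound}, the path construction of Fig.~\ref{fig:path_categories} (which indeed gives membership dimension $n-1=\diam(P_n)$, matching the ``four categories'' caption for a five-vertex path), and Lemma~\ref{lem:cross_product}, together with the additivity of the diameter under $\times$. Your write-up simply makes these implicit steps explicit.
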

\begin{corollary}
 For hypercubes $G$ we get $M(G) = \diam(G)=\ceil{\log n}$. \qed
\end{corollary}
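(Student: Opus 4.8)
The plan is to realize the $k$-dimensional hypercube as an iterated cross product of single edges and then combine the two lemmas already established. Writing $n = 2^k$, so that $\ceil{\log n} = k$, I would first observe that the hypercube on $n$ vertices is exactly the $k$-fold cross product $K_2 \times K_2 \times \cdots \times K_2$ of the single-edge graph $K_2$ (two vertices joined by an edge), in the sense made precise by the cross-product definition. Its diameter equals $k$: the distance between two vertices, viewed as binary strings of length $k$, is their Hamming distance, and this is maximized at value $k$ by a pair of antipodal vertices.

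For the upper bound I would first settle the base case $M(K_2) = 1$. With vertices $a$ and $b$, take the system $\calS = \{\{a\}, \{b\}\}$; each singleton trivially induces a connected subgraph, and since $\cdist(b,b) = 0 < 1 = \cdist(a,b)$, and symmetrically for the reverse direction, greedy routing succeeds with membership dimension one. I would then apply Lemma~\ref{lem:cross_product} repeatedly: using the associativity of the cross product together with the decomposition into $Q_{k-1} \times K_2$, a straightforward induction gives $M(Q_k) \le M(Q_{k-1}) + M(K_2) = M(Q_{k-1}) + 1$, hence $M(Q_k) \le k$.

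For the lower bound I would invoke Lemma~\ref{lem:path_lower_bound} directly: any good system satisfies $M(Q_k) \ge \diam(Q_k) = k$. Combining the two bounds yields $M(Q_k) = k = \ceil{\log n}$, which is the claimed equality.

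The argument is essentially a matter of assembling the two preceding lemmas, so there is no serious obstacle. The only points demanding care are verifying that $M(K_2) = 1$ (in particular that singleton categories are admissible connected subgraphs and suffice for routing in both directions) and confirming that the binary bound of Lemma~\ref{lem:cross_product} may legitimately be iterated, which rests on the associativity of the cross product.
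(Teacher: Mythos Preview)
Your proposal is correct and follows exactly the route the paper intends: the corollary is marked \qed\ because it is an immediate combination of Lemma~\ref{lem:path_lower_bound} (the diameter lower bound) and Lemma~\ref{lem:cross_product} (applied to the decomposition $Q_k = K_2 \times \cdots \times K_2$), and you have simply spelled out those details carefully.
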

\begin{corollary}
 For tori $G=C_k\times C_l$ we have $M(G) = \ceil{k/2} + \ceil{l/2}$. \qed
\end{corollary}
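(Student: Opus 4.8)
The plan is to derive both directions from the two lemmas just proved, reducing everything to a single cycle. The torus $C_k\times C_l$ is the Cartesian product of the preceding definition, so $\diam(C_k\times C_l)=\diam(C_k)+\diam(C_l)=\floor{k/2}+\floor{l/2}$, and I would treat the upper and lower bounds separately.

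For the upper bound I would invoke Lemma~\ref{lem:cross_product}, which gives $M(C_k\times C_l)\le M(C_k)+M(C_l)$, so it suffices to exhibit for a single cycle $C_m$ a good system of membership dimension $\ceil{m/2}$. Here I would adapt the prefix/suffix construction of Fig.~\ref{fig:path_categories} to the cyclic setting: for each vertex take the two arcs of length $\ceil{m/2}$ pointing clockwise and counterclockwise, so that $\cdist(\cdot,t)$ increases by one per step as one moves away from $t$ along either direction up to the antipodal region. Checking that every vertex other than $t$ then has a neighbor of strictly smaller $\cdist$, and that the membership dimension stays $\ceil{m/2}$, is the routine part.

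For the lower bound I would fix a target $t$ and study the potential $\phi(v)=\cdist(v,t)=\abs{\cat(t)\setminus\cat(v)}$ on the cycle. Each category containing $t$ induces a connected arc through $t$, whose complement is an arc avoiding $t$; hence $\phi$ is a sum of $\abs{\cat(t)}$ indicator bumps, each of total variation $2$, so by subadditivity $\mathrm{TV}(\phi)\le 2\abs{\cat(t)}$. The greedy condition forces $\phi(t)=0$, forces $\phi(v)\ge 1$ for $v\neq t$ (otherwise $v$ would have no strictly smaller neighbor), and forces every $v\neq t$ to admit a descending neighbor; I would convert these descent constraints into a lower bound on $\mathrm{TV}(\phi)$, and thereby on $\abs{\cat(t)}$. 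Lemma~\ref{lem:path_lower_bound} already yields $\memdim\ge\diam=\floor{k/2}+\floor{l/2}$, which meets the upper bound exactly when $k$ and $l$ are even.

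The hard part will be the odd case, i.e.\ closing the parity gap between $\floor{m/2}$ and $\ceil{m/2}$ on an odd cycle, since the diameter argument cannot see the extra category. I would concentrate on the two antipodal vertices, which lie at distance $\floor{m/2}$ from $t$ and are adjacent to each other, and attempt to show that the descent requirement across this adjacent pair forces one of them to have potential $\ceil{m/2}$. I expect this to be the main obstacle and the step I would scrutinise most carefully: a naive half-arc system fed into Lemma~\ref{lem:cross_product} appears to route with membership dimension only $\floor{k/2}+\floor{l/2}$, so establishing the claimed value in the odd case demands either a genuinely tighter cycle lower bound or a matching sharpening of the construction, and reconciling the two is where the real work lies. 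I would also check that any extra term forced on a single odd cycle is not lost when passing to the product, since lower bounds on $M$ need not add across $\times$.
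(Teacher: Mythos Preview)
Your overall plan---lower bound via Lemma~\ref{lem:path_lower_bound}, upper bound via Lemma~\ref{lem:cross_product} reduced to a single cycle---is exactly the paper's approach; the corollary is stated with a bare \qed\ and nothing beyond ``immediate from the two lemmas'' is intended.

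Your suspicion about the parity gap is well placed, but the resolution is not a sharper lower bound: the $\lceil\cdot\rceil$ in the stated corollary is a slip, and the value that actually follows from the two lemmas is $\lfloor k/2\rfloor+\lfloor l/2\rfloor$. Your own ``naive half-arc'' construction already witnesses $M(C_m)\le\lfloor m/2\rfloor$: take as categories all arcs on $\lfloor m/2\rfloor+1$ consecutive vertices when $m$ is odd (edges of length~$2$ suffice for $C_5$, singletons for $C_3$), and for even $m$ the arcs of length $m/2$. Combined with $\diam(C_m)=\lfloor m/2\rfloor$ this gives $M(C_m)=\lfloor m/2\rfloor$, and then Lemmas~\ref{lem:path_lower_bound} and~\ref{lem:cross_product} yield $M(C_k\times C_l)=\lfloor k/2\rfloor+\lfloor l/2\rfloor$ on the nose. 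The program you outline for the ``hard part''---forcing an extra category on an odd cycle via the antipodal pair---cannot succeed, since already $M(C_3)=1$ (singleton categories route the triangle) and $M(C_5)=2$, both equal to $\lfloor m/2\rfloor<\lceil m/2\rceil$. So drop the odd-case analysis; the immediate argument is complete once you note the floor.
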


\section{Improved Upper Bound}\label{sec:improved_upper_bound}

We construct for every graph $G$ a good system of categories $\calS$ with
membership dimension 
\[ \Oh\left( \diam(G) \cdot \log \frac{2 \abs{V}}{\diam(G)}\right).\]
It suffices to prove the bound for trees. For general graphs, we construct a
spanning tree of diameter $\diam(G)$ and route in the spanning tree. 

\begin{lemma}\label{lem:trees}
For any tree on $n$ nodes with diameter $d$ there is a system of categories $\calS$ of membership dimension 
\[
 \memdim(\calS) = \Oh(d \log(2 n/d)).
\]
\end{lemma}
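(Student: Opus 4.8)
The plan is to build the system by a recursive centroid decomposition, using the following structural observation to make correctness modular. Since every category is a connected subtree and $\cdist(w,t)=\abs{\cat(t)\setminus\cat(w)}$, along any path emanating from $t$ the value $\cdist(\cdot,t)$ is monotonically non-decreasing: a connected category containing $t$ occupies a prefix of such a path. Hence categories never interfere destructively, and to make the system good it suffices to guarantee, for every target $t$ and every $u\neq t$, a \emph{single} category containing $t$ and the neighbour of $u$ on the path from $u$ to $t$ but not $u$ itself; this already forces $\cdist$ to drop at that step. I will therefore treat each type of routing step separately and take the union of all categories produced.

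First I would fix a centroid $c$ of the current piece $T'$, i.e.\ a vertex whose removal leaves branches $T_1,\dots,T_k$ each of size at most $\abs{T'}/2$; write $d'=\diam(T')$ and let $r_i$ be the neighbour of $c$ in $T_i$. A route from $T_j$ to a target $t$ either stays inside one branch (deferred to the recursion on that branch, since intra-branch tree paths never leave it) or passes through $c$: walk up $T_j$ to $c$, cross into the branch $T_i\ni t$, then descend inside $T_i$. I realise these steps with three families at $c$. (i) Balls $B_0\subset\cdots\subset B_{d'}$ around $c$; these handle every inward step whose target is strictly closer to $c$ than the source, and in particular all routing toward $t=c$. (ii) The full branches $T_1,\dots,T_k$; the set $T_i$ contains $t$ and $r_i$ but not $c$, so it realises the single crossing step $c\to r_i$. (iii) \emph{Addressed} categories for the remaining inward steps: a source in $T_j$ heading toward a target lying deep in a different branch $T_i$.

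For family (iii) I would equip the branches with a Shannon prefix code whose codeword for $T_i$ has length $\Oh(\log(\abs{T'}/\abs{T_i}))$ and read it off a binary code-tree on the branches. For each internal code-tree node $\nu$ (splitting the branches into sides $L_\nu,R_\nu$) and each radius $s$ I create $\{c\}\cup\bigcup_{i\in L_\nu}T_i\cup\{u: u\in T_{i'},\ i'\in R_\nu,\ \dist(u,c)\le s\}$ together with its mirror. This set is connected, contains every $L_\nu$-side target and the radius-$s$ part of each $R_\nu$-branch but none of its radius-$(s{+}1)$ vertices; taking $s=\dist(w,c)-1$ and $\nu$ the meeting point of $T_i$ and $T_j$ in the code-tree realises the step $w\to\mathrm{parent}(w)$. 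The crucial point is that the inner part of each addressed set is restricted to the \emph{served} side, so a vertex of $T_j$ lies in only $\Oh(\log(\abs{T'}/\abs{T_j}))$ of these sets per radius; this is exactly what avoids a blow-up proportional to the degree $k$ and makes high-degree centroids (stars, spiders) affordable.

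Finally I would bound the membership dimension. A vertex accumulates categories only from the centroids on its root-to-leaf path of the recursion. At a level whose piece has diameter $d_\ell$, branch size $n_\ell$, and parent size $n_{\ell-1}$, the three families contribute $\Oh(d_\ell)$, $\Oh(1)$, and $\Oh\!\big(d_\ell(\log(n_{\ell-1}/n_\ell)+1)\big)$, so the per-vertex total is $\sum_\ell \Oh\!\big(d_\ell(\log(n_{\ell-1}/n_\ell)+1)\big)$ with $n_\ell\le n_{\ell-1}/2$. Using $d_\ell\le\min(d,n_\ell)$ and telescoping the logarithms this evaluates to $\Oh(d\log(2n/d))$: the levels with $n_\ell\ge d$ telescope to $d\log(n/d)$, while those with $n_\ell<d$ contribute only $\Oh(d)$ since $n_\ell\log(n_{\ell-1}/n_\ell)$ is dominated by a geometric series. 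I expect the main obstacle to be precisely the design in (iii): simultaneously serving every source-branch/target-branch pair, keeping the sets connected, and—above all—restricting them so the per-level cost is the \emph{weighted} $\Oh(d_\ell\log(n_{\ell-1}/n_\ell))$ rather than $\Oh(d_\ell\log k)$, as only the weighted form telescopes to the claimed bound.
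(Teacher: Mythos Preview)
There is a real gap at the centroid. Every set in your family~(iii) contains $c$ by construction, since $c$ is the only vertex connecting the $L_\nu$-side branches to the $R_\nu$-side branches and the set must be connected. The code tree on $k$ branches has $k-1$ internal nodes, each contributing at least two such sets, so $c$ lies in $\Omega(k)$ categories already at the first level; because you remove $c$ before recursing, nothing later can compensate. For the star with $\ell$ leaves and diameter~$2$ this yields $\abs{\cat(c)}\ge 2(\ell-1)$, whereas the target bound is $\Oh(\log\ell)$. A secondary issue: even away from $c$ your telescoping is loose. The radii in family~(iii) range up to the diameter of the \emph{piece} being processed, so the honest bound is $d_\ell\le\min(d,\text{piece size})$, not $\min(d,\text{branch size})$ as you use. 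At a transition level where the branch drops from size $\ge d$ to size~$1$, your formula gives a term of order $d\log(\text{piece size})$, and the claimed decomposition into ``levels with $n_\ell\ge d$ telescope to $d\log(n/d)$'' plus ``levels with $n_\ell<d$ contribute $\Oh(d)$'' does not go through.

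The paper avoids both problems by never doing a multi-way split. It uses a \emph{balanced routing cut} $(r,L,R)$: a vertex $r$ together with a bipartition of its neighbours such that each of the two resulting subtrees $T_L,T_R$ (both still containing $r$) has between $\tfrac13\abs{T}$ and $\tfrac23\abs{T}$ vertices. Only $\Oh(\diam(T))$ ball-type categories are added at each level, and one recurses on $T_L$ and on $T_R$. The second key idea---absent from your proposal---is a post-recursion \emph{merge}: if $R_1,\dots,R_a$ and $L_1,\dots,L_b$ are the recursively produced categories from the two sides that contain $r$, they are replaced by the unions $R_i\cup L_i$ (keeping the surplus $L_j$'s). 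This makes the number of categories through any vertex equal the \emph{maximum} along a single root-to-leaf path of the recursion tree rather than a sum, so the membership dimension is simply $\sum_i \min\{d,\,n(2/3)^i\}=\Oh(d\log(2n/d))$, with no Shannon code and no delicate telescoping.
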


We conjecture this bound to be tight. An
example might be the star with degree 
$n/d$ sending out paths of length $d$ as shown in Fig.~\ref{fig:long_star}.

\begin{proof}
For a tree $T$ let $r\in V(T)$ and consider a triple $(r,L,R)$ with
$L,R$ a partitioning of the neighbors of $r$ in $T$. 
After deleting $r$, the tree falls into components; take the ones containing
nodes from $L$. To these components add $r$ again (making it a neighbor to all
nodes in $L$) to get a tree $T_L = T_L(r)$. Build $T_R$ symmetrically. This
cuts $T$ into two trees; each vertex is in exactly one of $T_L$ and $T_R$
except for $r$, which belongs to both trees.  

Now, we call $(r,L,R)$ a \emph{balanced routing cut} if both $\abs{V(T_L)}$ and
$\abs{V(T_R)}$ are at least $c_1 \cdot \abs{V(T)}$ and at most $c_2 \cdot
\abs{V(T)}$.  
A basic graph theoretic argument shows that any tree has a balanced routing cut
for constants $c_1 = 1/3$ and $c_2 = 2/3$, see for example~\cite{chung89}.

Given a tree $T$ we now construct categories as follows. Take a balanced
routing cut $(r,L,R)$ of $T$. Similar to Eppstein et al.~\cite{orig_paper}, we
construct categories that allow routing from any vertex in $T_R$ all the way to
$r$ when having a target in $T_L$ (and symmetrically), see
Fig.~\ref{fig:tree_cat_1}. Then we recurse on $T_R$ and $T_L$. After that, we
unify some categories to decrease the membership dimension.\footnote{The recursion
means that we again find a balanced routing cut in $T_R$, so the vertex we
split at in $T_R$ does not have to be $r$ or in $R$. This is a major difference
to Eppstein et al.} 

\begin{figure}[t]
 \centering
 \includegraphics{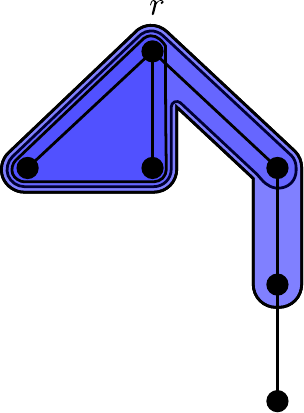}
 \caption{Categories that allow routing from $T_R$ to $r$ for a target in $T_L$.}
 \label{fig:tree_cat_1}
\end{figure}

The base case of this procedure is a graph of constant size, where we add any
valid system of categories of constant size. Observe that this way we construct
a valid system of categories for $T$. 

We now describe this construction in more detail.
Let $d_v = \dist(v,r)$ be the distance from $v \in V(T)$ to $r$ in
$T$.\footnote{or in $T_R$ or $T_L$, there is no difference if $v$ is in that
tree.} For routing from $T_R$ to $T_L$ we add the categories $V(T_L) \cup \{v \in V(T_R) \mid d_v \le k\}$,
for $0 \le k \le \diam(T_R)$.\footnote{This is the same construction as in
Eppstein et al.~\cite{orig_paper}} We add symmetric categories for routing from
$T_L$ to $T_R$. Note that these categories allow us to route from any vertex in $T_R$ to $r$
when having a target in $T_L$. 

Then we recurse on $T_R$ and $T_L$.

\begin{figure}[t]
\centering

 \subfigure[{The categories before recursing.}]{
	\makebox[0.45\linewidth]{\label{fig:before_recurse}\includegraphics{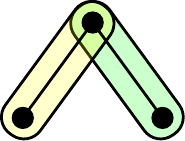}}}
 \hfill
 \subfigure[The categories for two nodes.]{
	\makebox[0.45\linewidth]{\label{fig:during_recurse}\includegraphics{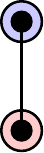}}}
 \subfigure[Categories after recursing.]{
	\makebox[0.45\linewidth]{\label{fig:after_recurse}\includegraphics{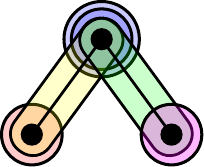}}}
 \hfill
 \subfigure[The merged categories.]{
	\makebox[0.45\linewidth]{\label{fig:merged}\includegraphics{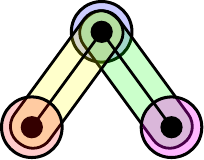}}}

\caption{Merging of categories after recursing.}
\label{fig:merging}
\end{figure}

After that, we change the system of categories slightly to decrease
its membership dimension. We have split at $(r,L,R)$ and in the recursion on $T_R$ we
constructed, say, categories $R_1,\ldots,R_a$ containing $r$ and in the
recursive call to $T_L$ we constructed categories $L_1,\ldots,L_b$ containing
$r$. Assume wlog.\ $a \le b$. Then we can replace the $R_i$ and $L_j$ by the
categories  
\[
	\{R_i \cup L_i \st 1\leq i\leq a\} \cup \{L_{a+1}, \ldots, L_b\}.
\]
These categories are still connected and greedy routing is still possible, as
the $R_i$ ($L_j$) were only needed to route inside $T_R$ ($T_L$). See
Fig.~\ref{fig:merging}. 

This ends the construction of categories. Observe that we construct a valid system of categories.

It remains to bound the membership dimension of the constructed system of
categories. Consider any node $v \in V(T)$. Joining categories as described in
the third part of our construction implies that the number of categories
containing $v$ equals the maximal number of categories containing $v$
constructed on any path in the recursion tree. As we select balanced routing
cuts, in each recursive call of the construction the number of nodes of the
subproblems is decreased by a constant factor and hence each such path has
length $\Oh(\log(n))$.  

Consider one such path. On level $i$, $0\le i \le \Oh(\log(n))$, we considered a subproblem $T_i$ on at most $n \cdot c_2^i$ nodes. The diameter of $T_i$ is bounded from above by $d = \diam(T)$, as $T_i$ is a subtree of $T$. Moreover, the diameter of $T_i$ is bounded by its number of nodes. Hence, 
\[
	\diam(T_i) \le \min\{ d, n \cdot c_2^i \}.
\]
Our procedure cuts $T_i$ and adds some categories to allow routing over that cut. Observe that the number of such categories is bounded by $\Oh(\diam(T_i))$. Thus, the number of categories containing $v$ can be bounded from above by (omitting the $\Oh()$)
\[
\sum_{i=0}^{\Oh(\log(n))} \min\{ d, n \cdot c_2^i \}.
\]
We can bound this sum as
\[
\sum_{i=0}^{-\log_{c_2}(n/d)} d \quad + \quad \quad \sum_{\mathrlap{i=-\log_{c_2}(n/d)+1}}^{\Oh(\log(n))} n \cdot c_2^i,
\]
which simplifies to 
\[
	d \log(n/d) + d = \Oh(d \log(2n/d)).
\]
\end{proof}

\section{Stars}\label{sec:stars_and_trees}

In this section we prove upper and lower bounds for stars. 
The lower bound for stars nearly matches the upper bound of the preceding
section. The star of diameter $2d$ and $\ell$ leaves has $1 + \ell d$ nodes. The center
node has degree $\ell$ and each leaf is joined to the center node by a path of
length $d$. Fig.~\ref{fig:star} shows a star with four leaves and diameter
2. Note that as $|V| = 1 + \ell d$ for stars we can replace $\ell$ by $n/d$ in our asymptotic bounds.
We start with simple upper and lower bounds.

\begin{lemma}\label{lem:simple_star_lower_bound} In a star with $\ell$ leaves, the center node is contained in at least
$\log \ell$ categories. \end{lemma}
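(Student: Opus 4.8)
The plan is to show that the $\ell$ leaves force $\ell$ distinct membership patterns among the categories that contain the center, so that $2^{\abs{\cat(c)}} \ge \ell$. Write $c$ for the center and, for each arm $j \in \{1,\dots,\ell\}$, let $n_j$ denote the neighbor of $c$ lying on arm $j$. The first observation I would record is a connectivity fact specific to stars: $c$ is a cut vertex whose removal splits the graph into the $\ell$ arms, so any connected category $C$ that meets two distinct arms must contain $c$, and any connected $C$ with $c \in C$ meets arm $j$ exactly when $n_j \in C$. For a category containing $c$ I therefore record which arms it touches via the conditions $n_j \in C$, and for each arm $j$ I set $\sigma_j := \set{C \in \cat(c)}{n_j \in C} \subseteq \cat(c)$.

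The core step is to prove that the $\sigma_j$ are pairwise distinct, in fact that $\sigma_j \setminus \sigma_i \ne \emptyset$ for all $i \ne j$. Fix $i \ne j$ and route from the leaf of arm $i$ to the target $t$ equal to the leaf of arm $j$. Since $c$ is the unique cut vertex and the message starts in arm $i$, the greedy walk stays inside arm $i$ until it first reaches $c$ (the arms are the connected components of $G - c$); hence the first arrival at $c$ is via the step $n_i \to c$, which, being a greedy step, satisfies $\cdist(c,t) < \cdist(n_i,t)$. I would then use the elementary identity
\[
\cdist(n_i,t) - \cdist(c,t) = \abs{\set{C}{t\in C,\ c\in C,\ n_i \notin C}} - \abs{\set{C}{t\in C,\ n_i \in C,\ c \notin C}}.
\]
The second term vanishes: a connected category containing $t$ (in arm $j$) and $n_i$ (in arm $i$) but not $c$ would meet two arms without the cut vertex, which is impossible. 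So positivity of the left side forces some category $C$ with $t \in C$, $c \in C$, and $n_i \notin C$. As $c \in C$ and $t$ lies in arm $j$, connectivity gives $n_j \in C$, whence $C \in \sigma_j \setminus \sigma_i$.

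Finally, since $\sigma_j \setminus \sigma_i \ne \emptyset$ for every ordered pair $i \ne j$, the sets $\sigma_1, \dots, \sigma_\ell$ are $\ell$ distinct subsets of $\cat(c)$ (indeed an antichain, by the symmetric application, so Sperner's theorem would even sharpen the constant), giving $\ell \le 2^{\abs{\cat(c)}}$ and hence $\abs{\cat(c)} \ge \log \ell$.

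I expect the main obstacle to be the routing step: justifying rigorously that the first time the greedy walk hits $c$ it enters from $n_i$, and converting the inequality $\cdist(c,t) < \cdist(n_i,t)$ into the existence of a category in $\sigma_j \setminus \sigma_i$. Once the cut-vertex and connectivity observations are in place, the distance bookkeeping is routine and the counting conclusion is immediate.
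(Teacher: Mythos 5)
Your proof is correct, and while it shares the paper's counting skeleton---both arguments exhibit $\ell$ pairwise distinct membership patterns among the categories containing the center $c$ and conclude $2^{\abs{\cat(c)}} \ge \ell$---your route to distinctness is genuinely different in one respect. The paper first reduces to stars of diameter $2$ (``every star with higher diameter includes one with diameter 2''), where each leaf $u$ has $c$ as its \emph{unique} neighbor, so equal patterns $p^{(u)} = p^{(v)}$ immediately kill the only possible greedy step $u \to c$. Taken literally, that reduction needs an argument: a good category system for the large star does not restrict to a good system for the diameter-$2$ substar, and in the large star the center's neighbors $n_i$ have a second neighbor down their arm, so one must still rule out that the greedy step goes away from $c$. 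Your proof supplies exactly this missing step: you work in the full star, attach the patterns $\sigma_j$ to the center's neighbors rather than to leaves, and extract the required category in $\sigma_j \setminus \sigma_i$ from the first arrival of a greedy walk at the cut vertex $c$, using the identity for $\cdist(n_i,t)-\cdist(c,t)$ together with connectedness of categories. This costs a little bookkeeping but is self-contained and valid for every diameter. Your closing observation that the $\sigma_j$ form an antichain also appears in the paper, but only later: it is the heart of Lemma~\ref{lem:tight_star_bound}, where the maximal-antichain bound $\binom{k}{\lfloor k/2 \rfloor} \ge \ell$ sharpens $\log\ell$ to $\log \ell + (\frac{1}{2} + o(1)) \log\log \ell$ for diameter-$2$ stars; your version shows the same sharpening holds for the center of a star of arbitrary diameter.
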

\begin{proof} It suffices to show the claim for stars with diameter 2, 
as every star with higher diameter includes one with diameter 2.
Let $C_1,\ldots,C_m$ be the categories containing the center vertex $c$ and consider, for each leaf $u$,
the bitstring $p^{(u)}$ of length $m$ defined by 
\[
   \left( p^{(u)} \right)_i = (u \in C_i).
\]
If two bitstrings $p^{(u)}, p^{(v)}$ are equal for leafs $u \ne v$, greedy routing from $u$
to $v$ is impossible, as the distance to $v$ does not decrease along the edge $u c$.
Hence, these $\ell$ bistrings are pairwise different.
Since there are only $2^m$ different bitstrings of size $m$, this implies
$\cat(c) = m \ge \log \ell$.

An alternative proof can be found in the Appendix, Proof~\ref{pr:simple_star_lower_bound}.
\end{proof}

The same argument establishes:

\begin{lemma} In a tree, each node $v$ is contained in at least $\log \deg v$
categories. \qed
\end{lemma}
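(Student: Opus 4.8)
The plan is to reduce the tree statement to the star case already handled in Lemma~\ref{lem:simple_star_lower_bound}, or rather to repeat its bitstring argument in the local setting of a single node. First I would fix a node $v$ in the tree $T$ and let $N(v) = \{u_1, \ldots, u_k\}$ be its neighbors, where $k = \deg v$. Since $T$ is a tree, deleting $v$ splits $T$ into exactly $k$ connected components, one containing each neighbor $u_i$; call them $T_1, \ldots, T_k$. The key structural feature is that the unique path in $T$ between any two neighbors $u_i$ and $u_j$ (with $i \neq j$) passes through $v$, so $v$ genuinely plays the role of the center of a star whose leaves are the $u_i$.

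The main step is the bitstring encoding. Let $C_1, \ldots, C_m$ be the categories containing $v$, so $m = \abs{\cat(v)}$. For each neighbor $u_i$, define the length-$m$ bitstring $p^{(u_i)}$ by $\left(p^{(u_i)}\right)_j = (u_i \in C_j)$, exactly as in the proof of Lemma~\ref{lem:simple_star_lower_bound}. I claim these $k$ bitstrings must be pairwise distinct. Suppose $p^{(u_i)} = p^{(u_j)}$ for $i \neq j$. Consider greedy routing of a message currently at $u_i$ addressed to some target $t$ lying in the component $T_j$ (for instance $t = u_j$). Every category separating $u_i$ from $t$ — every $C \in \cat(t) \setminus \cat(u_i)$ — either contains $v$ or does not. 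If $C$ contains $v$, then since $C$ induces a connected subgraph and contains both $v$ and $t \in T_j$ while $u_i \in T_i$, the bitstring equality forces $u_i \in C \iff u_j \in C$; more to the point I would argue that $v$ lies on the path from $u_i$ to $t$, so the step from $u_i$ back toward $v$ cannot decrease distance when the two neighbors are indistinguishable on the $v$-categories. The clean way to phrase this is that since $v$ separates $u_i$ from $t$, any greedy step from $u_i$ that makes progress toward $t$ must move into categories containing $v$, and if $u_i$ and $u_j$ agree on all such categories the required strict decrease fails. Hence the $k$ bitstrings are distinct, and since there are only $2^m$ strings of length $m$ we get $k \le 2^m$, i.e.\ $m \ge \log \deg v$.

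The hard part is making the reduction to the star argument fully rigorous, because in a general tree the categories containing $v$ are also used for routing among all the other components simultaneously, not just between the two neighbors under consideration. The cleanest route is to observe that the local obstruction is purely about the \emph{first} greedy hop out of $u_i$: to route from $u_i$ into $T_j$ the message must eventually pass through $v$, so it must first step from $u_i$ to $v$ (that is the only neighbor of $u_i$ on the way), and this step requires some category containing $v$ but not $u_i$ while containing $t$ — which, by connectivity through $v$, distinguishes $u_i$ from $u_j$. I expect that spelling out this single-hop necessity, rather than reasoning about the whole route, is what turns the informal ``same argument'' into a correct proof; everything else is a verbatim transfer of the counting bound from Lemma~\ref{lem:simple_star_lower_bound}.
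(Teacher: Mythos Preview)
Your approach is exactly the one the paper intends --- the paper offers no further argument beyond ``the same argument establishes'' and a \qed. One point needs tightening: when $p^{(u_i)}=p^{(u_j)}$, you must show that \emph{no} neighbor of $u_i$ decreases the distance to $u_j$, not only $v$. You handle the step to $v$ correctly (no category contains both $v$ and $u_j$ but not $u_i$, by the bitstring equality, so $d(v,u_j)=d(u_i,u_j)$), but your justification for the remaining neighbors --- ``it must first step from $u_i$ to $v$ (that is the only neighbor of $u_i$ on the way)'' --- conflates the tree path with the greedy route, which is not a priori forced to follow it. The fix is immediate: for any neighbor $w$ of $u_i$ lying in $T_i$, every connected category containing both $u_j$ and $w$ contains the tree path between them and hence contains $u_i$; thus $\cat(u_j)\setminus\cat(u_i)\subseteq\cat(u_j)\setminus\cat(w)$ and $d(w,u_j)\ge d(u_i,u_j)$. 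With that, no neighbor of $u_i$ strictly decreases the distance to $u_j$, greedy routing fails at $u_i$, and the counting bound $\deg v\le 2^m$ follows exactly as in Lemma~\ref{lem:simple_star_lower_bound}.
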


\begin{lemma}\label{lem:simple_star_upper_bound} For a star with $\ell$ leaves and diameter $2$, there is a system of categories $\calS$ with membership dimension
\[
 \memdim(\calS) \leq 1 + 2 \ceil{\log \ell}.
\]
\end{lemma}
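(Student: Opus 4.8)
The plan is to encode the leaves by bitstrings and to turn each bit into a pair of categories through the center. Let $k = \ceil{\log \ell}$ and assign to the $\ell$ leaves distinct labels in $\{0,1\}^k$. For each bit position $i$ I would introduce two categories through the center $c$, namely $C_i = \{c\} \cup \{\text{leaves whose } i\text{-th bit is } 1\}$ and $C_i' = \{c\} \cup \{\text{leaves whose } i\text{-th bit is } 0\}$. Each such set induces a connected star and hence is an admissible category; moreover every leaf lies in exactly one of $C_i, C_i'$ for each $i$, so every leaf belongs to exactly $k$ of these $2k$ categories while $c$ belongs to all $2k$. Finally I would add the singleton category $\{v\}$ for every vertex $v$, which is trivially connected.

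Counting memberships is then immediate: the center lies in the $2k$ encoding categories plus its own singleton, i.e.\ $1 + 2\ceil{\log \ell}$, while each leaf lies in $k$ encoding categories plus its singleton, i.e.\ $1 + \ceil{\log \ell}$; the maximum is the claimed bound. It remains to verify that this system supports greedy routing, which splits into three cases according to source and target. Routing from the center $c$ to a leaf $t$ works because the singleton $\{t\}$ contains $t$ but not $c$, so $\cdist(c,t) \ge 1$, while $\cdist(t,t) = 0$; symmetrically, routing from a leaf $u$ to the center uses $\{c\}$ to guarantee $\cdist(u,c) \ge 1$. For routing between two distinct leaves $u$ and $t$ the only neighbor of $u$ is $c$, so I must show $\cdist(c,t) < \cdist(u,t)$.

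The heart of the argument is this last inequality. Since every encoding category contains $c$, the only category separating $t$ from $c$ is the singleton $\{t\}$, giving $\cdist(c,t) = 1$. For $\cdist(u,t)$ I use that $u$ and $t$ carry distinct labels of equal weight (each leaf sits in exactly $k$ encoding categories), so neither label dominates the other coordinatewise; concretely, at a bit position where the labels differ, $t$ lies in exactly one of $C_i, C_i'$ and $u$ in the other, exhibiting an encoding category that contains $t$ but not $u$. Together with the singleton $\{t\}$ this yields $\cdist(u,t) \ge 2 > 1 = \cdist(c,t)$, as required. The main obstacle is precisely securing $\cdist(c,t) \ge 1$: in a star no connected category can contain a leaf without also containing the center, so the singletons are forced, and it is the complementary-pair (equal-weight) encoding that makes the leaf-to-leaf comparison go through while keeping the center's encoding membership at exactly $2k$.
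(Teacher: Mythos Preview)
Your construction is essentially the paper's own: binary labels on the leaves, complementary ``bit $=1$'' and ``bit $=0$'' categories through the center, plus singleton categories, with the same distance computation $\cdist(c,t)=1$ versus $\cdist(u,t)\ge 2$ for distinct leaves. Two minor remarks: the paper only adds singletons for the leaves (so the center sits in $2\lceil\log\ell\rceil$ categories rather than $1+2\lceil\log\ell\rceil$, still within the stated bound), and your appeal to ``equal weight'' labels is unnecessary and slightly confused---arbitrary distinct bitstrings need not have equal Hamming weight, but your actual argument only uses that they differ in some coordinate, which is all that is required.
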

\begin{proof} Let the leaves be numbered from $1$ to $\ell$. Every leaf forms a category of its own. For every $i$, $0 \le i < \ceil{\log \ell}$ we have two 
categories $Z_i$ and $O_i$: $Z_i$ contains the center and all leaves that  have
a zero in the $i$-th bit of their binary representation; $O_i$ contains the
center and all leaves that have a one in the $i$-th bit of their binary
representation. See Fig.~\ref{fig:star}. Clearly every node is contained in at most $1+2\ceil{\log \ell}$ categories. 
\begin{figure}[t]
	\centering
 	\includegraphics[width=0.5\linewidth]{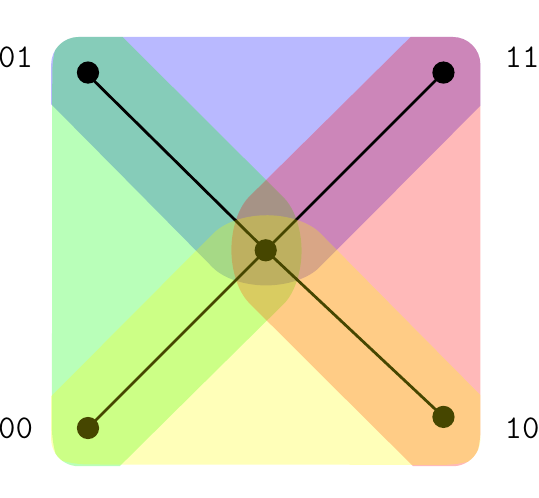}
	\caption{Categories $Z_i$ and $O_i$ for a star with $\ell=4$.}\label{fig:star}
\end{figure}

Consider any two leaves $u$ and $v$ and assume we want to route from $u$ to
$v$. Let $k$ be the number of positions in which the binary representation of $u$ and $v$ differ. Then $\cdist(u,v) = 1 + k \ge 2$ and $\cdist(c,v) = 1$. Thus we can successfully route from $u$ to
$v$. 
\end{proof}

We next improve on both bounds.

\begin{lemma}\label{lem:tight_star_bound} Let $k$ be minimal such that
 \[
  {k \choose \floor{k/2}} \geq \ell.
 \]

Then $k = \log \ell + (\frac{1}{2} + o(1)) \log\log \ell$.

There is a system of categories $\calS$ that supports routing in stars with $\ell$
leaves and diameter $2$ having
\[
 \memdim(\calS) = k.
\]
This bound is tight, i.e., every good system of categories $\calS$ for this graph
has
\[
 \memdim(\calS) \ge k.
\]
\end{lemma}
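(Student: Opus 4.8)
The statement has three components, and I would handle them in the order: asymptotic estimate of $k$, lower bound, then matching upper bound. The asymptotic estimate is routine: by Stirling's formula $\binom{k}{\floor{k/2}} = \Theta(2^k/\sqrt{k})$, so taking logarithms the defining inequality $\binom{k-1}{\floor{(k-1)/2}} < \ell \le \binom{k}{\floor{k/2}}$ gives $\log \ell = k - \tfrac12 \log k + \Oh(1)$, which I would invert (using $k = \Theta(\log\ell)$, hence $\log k = \log\log\ell + o(1)$) to $k = \log\ell + \tfrac12\log\log\ell + \Oh(1) = \log \ell + (\tfrac12 + o(1))\log\log\ell$. The real content is the two bounds, and the whole proof turns on a single structural observation connecting greedy routing on the star to Sperner's theorem on antichains.

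That observation is the following. In a diameter-$2$ star, deleting the center $c$ leaves the leaves pairwise nonadjacent, so \emph{any} category that does not contain $c$ induces a connected subgraph on at most one leaf, i.e.\ is a singleton. Hence for an arbitrary good system $\calS$ I would split the categories into those containing $c$ --- call them $C_1,\dots,C_m$ with $m = \abs{\cat(c)}$ --- and singleton leaf-categories, and to each leaf $u$ assign its \emph{signature} $S_u = \set{i}{u \in C_i} \subseteq [m]$. A short distance computation then drives everything: since every $C_i$ contains $c$, the only category separating a leaf $v$ from $c$ is the singleton $\{v\}$, so $\cdist(c,v) \le 1$, whereas $\cdist(u,v) = \abs{S_v \setminus S_u} + [\,\{v\}\in\calS\,]$. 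As the only neighbor of the leaf $u$ is $c$, greedy routing from $u$ to $v$ requires $\cdist(c,v) < \cdist(u,v)$, which reduces exactly to $S_v \not\subseteq S_u$, independently of the singleton term.

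For the lower bound I would run this for \emph{all} ordered pairs of distinct leaves: the conditions $S_v \not\subseteq S_u$ and $S_u \not\subseteq S_v$ force the $\ell$ signatures to be distinct and pairwise incomparable, i.e.\ an antichain in $2^{[m]}$. By Sperner's theorem $\ell \le \binom{m}{\floor{m/2}}$, and by the minimality in the definition of $k$ this yields $m \ge k$; since $\memdim(\calS) \ge \abs{\cat(c)} = m$, the lower bound follows.

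For the matching upper bound I would reverse the construction. Choosing $k$ minimal gives $\binom{k}{\floor{k/2}} \ge \ell$, so by Sperner the middle layer of $2^{[k]}$ contains an antichain of $\ell$ distinct $\floor{k/2}$-element sets; I assign these as the signatures $S_u$, realize them by the $k$ center-categories $C_i = \{c\} \cup \set{u}{i \in S_u}$ (each connected, as $c$ is adjacent to all leaves), and add a private singleton $\{u\}$ for every leaf. Leaf-to-leaf routing works because the signatures form an antichain; center-to-leaf routing works because $\cdist(v,v)=0 < 1 = \cdist(c,v)$ via the singleton; and leaf-to-center routing works because each leaf misses $k - \floor{k/2} \ge 1$ center-categories. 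The center lies in exactly $k$ categories and every leaf in $\floor{k/2}+1 \le k$, so $\memdim(\calS)=k$. The one step carrying all the weight is the structural reduction of the routing requirement to the antichain condition; once greedy routing is pinned to Sperner's theorem in both directions the tightness is immediate, and the remaining verifications are bookkeeping.
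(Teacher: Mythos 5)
Your proposal is correct and follows essentially the same route as the paper: a Sperner/antichain argument on the signatures induced by the center's categories for the lower bound, and an assignment of distinct middle-layer bitstrings (realized as center-categories plus per-leaf singletons) for the matching upper bound. Your explicit structural observation that any category avoiding the center must be a singleton leaf, and the resulting exact characterization of routing as $S_v \not\subseteq S_u$, is a slightly more careful write-up of what the paper asserts directly, but it is the same proof.
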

\begin{proof}
Let $c$ be the center of the star. For any two distinct leaves $i$ and $j$ there must be categories $C$ and $C'$ such that $C,C' \in \cat(j)
\setminus \cat(i)$, $C \in \cat(c)$, and $C' \not\in \cat(c)$, as otherwise one cannot
route from $i$ to $j$. Let $C_1$, \ldots, $C_k$ be the categories containing
the center. For every $i$ define a binary vector $v_i$ of length $k$ by
\[      
	(v_i)_h = (i \in C_h) \quad\text{for $1 \le h \le k$.} 
\]
The vectors $v_i$, $1 \le i \le \ell$ form an anti-chain in the set of all binary
vectors of length $k$ as for every $i$ and $j$ with $i \neq j$, there must be
a $h$ with $(v_i)_h = 0$ and $(v_j)_h = 1$. As the maximal size of an anti-chain in
the set of all binary vectors of length $k$ is 
\[
 { k \choose \floor{k/2}},
\]
the lower bound follows.

We turn to the the upper bound. There are $\ell$ distinct bitstrings of length $k$ each containing exactly $\floor{k/2}$
ones. Arbitrarily assign the strings to the leaves. We have $k +
\ell$ categories. The latter $\ell$ categories are singleton sets; they contain one
leaf each. The former $k$ categories contain the center and the leaves for
which the corresponding bit is one. For any two distinct leaves $i$ and $j$ there is a category $C$ containing $j$ and the center, but not
$i$, because the bitstrings form an anti-chain. We use $C$ to route from $i$ to
the center and the singleton category for $j$ to continue to $j$.   
Observe that the center is in $k$ categories and each leaf is in $\lfloor k/2 \rfloor + 1 \le k$ categories.

The equation for $k$ can be derived using Stirling's formula.
\end{proof}

Similar techniques apply for stars with larger diameters.

\begin{lemma}\label{lem:long_star_upper_bound} For the star with $\ell$ leaves
and diameter $2 d$, there is a system of categories $\calS$ with membership dimension
\[
 \memdim(\calS) \in \Oh(d\log (n/d)).
\]
\end{lemma}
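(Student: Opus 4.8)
The target is a star with $\ell$ leaves, each joined to the center by a path of length $d$, so that $n = 1 + \ell d$ and $\diam = 2d$. We want a good system of membership dimension $\Oh(d \log(n/d))$. Let me think about how to combine the two ideas already developed: the anti-chain coding from Lemma \ref{lem:tight_star_bound} that distinguishes leaves at the center, and the path-to-center construction from Lemma \ref{lem:path_lower_bound}/Fig.~\ref{fig:path_categories} that lets a message climb a path.

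The plan is to layer an anti-chain labeling onto the path structure. First I would pick $k$ with $\binom{k}{\lfloor k/2\rfloor} \ge \ell$, so $k = \Oh(\log \ell)$, and assign to each of the $\ell$ arms a distinct $k$-bit balanced string $b^{(1)},\ldots,b^{(\ell)}$, exactly as in Lemma \ref{lem:tight_star_bound}. For each bit position $h$, $1 \le h \le k$, and each depth parameter $0 \le t \le d$, I would build a category that contains the center together with, for every arm $j$ whose label has a $1$ in position $h$, the $t$ nodes of arm $j$ closest to the center (i.e.\ those at distance $\le t$ from the center). This gives $k(d+1) = \Oh(d \log \ell) = \Oh(d\log(n/d))$ categories that all pass through the center; since each is a union of initial segments of arms glued at the center, each is connected. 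Symmetrically I must also handle descent: once a message is at the center and needs to go out along arm $j$ to a specific leaf-path node, I can route down using the $h$-indexed categories for any bit $h$ that is set in $b^{(j)}$, letting the depth parameter $t$ play the role of the $k$-indexed path categories in Lemma \ref{lem:path_lower_bound}.

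The routing argument splits into the usual two phases. To go from a source node $s$ on arm $i$ to a target $t$ on arm $j \ne i$, I first climb: because the arms are disjoint paths, for each step toward the center the distance $\cdist(\cdot, t)$ must drop, which I arrange by noting that the label $b^{(j)}$ has a bit $h$ set to $1$ that $b^{(i)}$ has $0$, so the depth-indexed categories for that $h$ count $t$'s side but not $s$'s side and force monotone progress toward the center. Once at the center I descend along arm $j$ by increasing the depth parameter in the categories corresponding to the bits set in $b^{(j)}$, exactly mirroring the single-path construction. The same-arm case ($i = j$) is pure path routing and is handled by the depth categories alone. A small number of additional singleton or arm-identifying categories may be needed to break ties at the leaves, but these add only $\Oh(1)$ per node.

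The main obstacle I expect is verifying that the distance function is \emph{strictly} monotone decreasing along the whole route, not just that a distinguishing category exists. The anti-chain property guarantees separability at the center, but I must check that combining the bit-index $h$ with the depth index $t$ does not create a node from which no neighbor strictly decreases $\cdist(\cdot,t)$ — in particular at the transition where the message reaches the center and switches from climbing to descending. Making the bookkeeping clean likely requires counting precisely how many of the $k(d+1)$ categories contain $t$ but not a given node $v$, and showing this count drops by at least one across the correct neighbor edge; this is the place where the proof must be careful rather than merely invoking the earlier lemmas.
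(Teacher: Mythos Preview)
Your construction has the category shape inverted, and this breaks the climbing phase. Fix a distinguishing bit $h$ with $b^{(j)}_h = 1$ and $b^{(i)}_h = 0$. Your category $C_{h,t}$ consists of the center together with the first $t$ nodes of every arm whose $h$-th bit is $1$; in particular it contains \emph{no} node of arm $i$ for any $t$. Hence as the message moves along arm $i$ toward the center, the set $\cat(\text{target}) \setminus \cat(\text{source})$ restricted to the family $\{C_{h,t}\}_t$ does not change at all: the source is outside every $C_{h,t}$ before and after each step. The count is constant, not strictly decreasing, so greedy routing cannot climb. (The same problem appears in the same-arm outward case: if the target sits farther from the center than the source on arm $i$, then for every $h$ set in $b^{(i)}$ the source already lies in every $C_{h,t}$ that contains the target, so $\cdist$ contributes $0$ and never drops; your ``pure path routing'' claim fails.)

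What is needed is the opposite containment pattern: categories that contain the target's arm \emph{in full} and a growing prefix of the \emph{other} arms. Concretely, for each bit $h$ and each $0 \le k \le d$ take the category consisting of all nodes on arms with $h$-th bit equal to $1$, together with the nodes at distance at most $k$ from the center on the remaining arms. Then the target leaf is in all $d+1$ of these, and a source at distance $r$ on an arm with $h$-th bit $0$ is in exactly those with $k \ge r$; stepping toward the center enters one new category and $\cdist$ drops by one. This is precisely the paper's construction. Descent from the center to the leaf is then handled separately by $d$ nested per-arm categories (the analogue of the singleton leaf categories in the diameter-$2$ case), not by the bit-indexed family. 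With this correction the membership count is $\Oh(d\log\ell) + d = \Oh(d\log(n/d))$ as claimed; your anti-chain refinement over plain binary labels is harmless but also unnecessary for the asymptotic bound.
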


\begin{proof} The Lemma follows from an application of the algorithm from Lemma~\ref{lem:trees}. We also give an alternative proof.

We perform a straightforward extension of the solution for stars of diameter 2. Again every leaf $u$ has a binary number $b(u)$, encoded as categories $O_i$ and $Z_i$, $1\leq i\leq \ceil{\log \ell}$. Create $d$ copies of these numbers using different categories. Category $O_i^{(k)}$, respectively $Z_i^{(k)}$, contains all nodes on the path from leaves $\{u \st b_i(u) = 1\}$, respectively $\{u \st b_i(u) = 0\}$, to the center, as well as all nodes on the remaining paths up to a (graph-)distance $k$ from the center, see Fig.~\ref{fig:long_star}.

Additional $d$ categories are needed for every leaf to route from the center down to the leaves, analogous to the singleton categories for leaves in the small star.\end{proof}

\begin{figure}[t]
	\centering
	\includegraphics[width=0.5\linewidth]{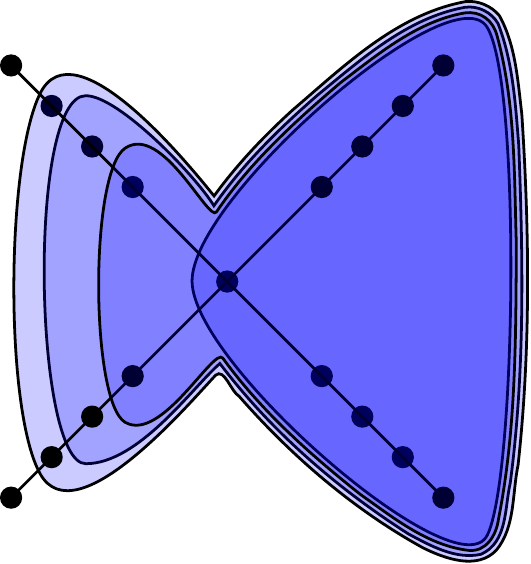}
	\caption{The $d$ copies of a category.}\label{fig:long_star}
\end{figure}

We proceed to show a lower bound. A key ingredient of our proof is the following problem of covering the set
\[
	Q = \{(i,j) \st 1 \le i,j \le \ell,\ i \neq j\}
\]
by $t$ rectangles, i.\,e., 
\[
	Q = \bigcup_{1 \le k \le t} S_k  \times T_k .
\]
Clearly $S_k \cap T_k = \emptyset$ in any covering of $Q$ by
rectangles. We will first prove a lower bound for the rectangle covering
problem and then obtain a lower bound for membership dimension by a reduction
to the rectangle covering problem. The intuition for the reduction is
that every category $C_k$ corresponds to a set $S_k\times T_k$
such that $(i,j)\in S_k\times T_k$ if $C_k$ allows us to route between $i$ and
the center when having $j$ as target. Hence, by bounding the
number of rectangles needed to cover $Q$, we bound the number of categories
that need to contain the center in a star with $\ell$ leaves. 

\begin{lemma} \label{lem:sum_sk} If $4 \ln t \le \ln \ell$, we have
\[
	\sum_k \abs{S_k} \ge \frac{\ell \ln \ell}{32\ln t}
\] 
in any covering of $Q$ by $t$ rectangles. 
\end{lemma}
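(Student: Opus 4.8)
The plan is to translate the rectangle covering into a purely set-theoretic condition on the ``rows'' and then read off the bound from a counting argument. For each element $i$ I record where it sits on the two sides of the rectangles: let $A_i = \set{k}{i \in S_k}$ and $B_i = \set{k}{i \in T_k}$. Double counting the incidences $(i,k)$ with $i \in S_k$ gives $\sum_k \abs{S_k} = \sum_i \abs{A_i}$, so it suffices to lower-bound $\sum_i \abs{A_i}$. The disjointness $S_k \cap T_k = \emptyset$ forced on any covering becomes $A_i \cap B_i = \emptyset$, and the covering property $Q = \bigcup_k S_k \times T_k$ becomes $A_i \cap B_j \neq \emptyset$ for every pair $i \neq j$: a witness rectangle $k$ for $(i,j)$ is exactly an index in $A_i \cap B_j$.

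The key step is the observation that the sets $A_1, \ldots, A_\ell$ are \emph{pairwise distinct}. Indeed, if $A_i = A_{i'}$ for some $i \neq i'$, then applying the covering property to the pair $(i',i)$ forces $A_{i'} \cap B_i \neq \emptyset$; but $A_{i'} = A_i$ and $A_i \cap B_i = \emptyset$ give $A_{i'} \cap B_i = \emptyset$, a contradiction. This is the one place where the disjointness of the rectangles is genuinely used, and it is the crux of the argument; everything afterwards is counting. (Note also that for $t = 1$ no covering of $Q$ exists when $\ell \ge 2$, so I may assume $t \ge 2$.)

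Given distinctness, I bound how many rows can be ``cheap''. Since the $A_i$ are distinct subsets of $\{1,\dots,t\}$, the number of $i$ with $\abs{A_i} < a$ is at most $\sum_{j=0}^{a-1}\binom{t}{j} \le t^{a}$ for $t \ge 2$. Choosing $a = \floor{\ln(\ell/2)/\ln t}$ makes $t^{a} \le \ell/2$, so at least $\ell/2$ of the rows satisfy $\abs{A_i} \ge a$, whence $\sum_i \abs{A_i} \ge (\ell/2)\,a$.

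Finally I plug in the hypothesis $4 \ln t \le \ln \ell$. It guarantees $\ln(\ell/2)/\ln t \ge 2$ (using $\ln(\ell/2) \ge \tfrac12 \ln \ell$, which holds since the hypothesis forces $\ell \ge 16$), so the floor costs at most a factor $2$ and yields $a \ge \ln\ell/(4\ln t)$. Combining, $\sum_i \abs{A_i} \ge (\ell/2)\,a \ge \ell\ln\ell/(8\ln t) \ge \ell \ln \ell/(32\ln t)$, with constants to spare. I expect the distinctness argument to be the only real obstacle; the choice of $a$ and the role of $4\ln t \le \ln\ell$ — ensuring $a$ is at least a small constant so that rounding down is harmless — are the routine bookkeeping that produces the stated constant.
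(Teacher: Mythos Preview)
Your argument is correct and follows essentially the same approach as the paper: show that the sets $A_i$ (equivalently, the characteristic vectors $v_i$ in the paper) are pairwise distinct via the covering property together with $S_k \cap T_k = \emptyset$, then lower-bound $\sum_i \abs{A_i}$ by counting how few distinct subsets of $\{1,\dots,t\}$ can have small size. Your counting step is in fact cleaner than the paper's case analysis (which splits on whether at least $\ell/2$ of the vectors have weight $\ge h_0$, and then again on $h_0$ versus $t/3$) and already yields the bound with constant $8$ rather than $32$.
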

\begin{proof}The proof uses a double counting technique. For every $i$, $1 \le i \le \ell$, define a binary vector $v_i \in \{0,1\}^t$ that indicates which $S_k$ contain $i$ as
\[
	(v_i)_k = (i \in S_k), \quad 1 \leq k \leq t.
\]
We show that the $v_i$ are pairwise distinct and hence there must be at least $\ell$ such vectors. Consider any $i$ and $j$ with $i \neq j$. Since $(i,j)$ in $Q$, there must be
a $k$ with $i \in S_k$ and $j \in T_k$. Since $S_k \cap T_k = \emptyset$, we
conclude $j \not\in S_k$ and hence $(v_i)_k \neq (v_j)_k$.

Clearly the total number of ones in all $v_i$ equals $\sum_k \abs{S_k}$. This
number is minimized if there is an $h_0$ such that all vectors with less than
$h_0$ ones are used, and the remaining vectors contain exactly $h_0$ ones. Then
$h_0$ must be such that 
\begin{align*}
  \ell &\le \sum_{h=0}^{h_0} {t \choose h} \le \sum_{h=0}^{h_0} t^h \\
	&= \frac{t^{h_0 + 1} - 1 }{t - 1} \\
	&\le t^{h_0 + 1}. 
\end{align*}
Thus as $4\ln t \leq \ln \ell$ by assumption,
\[   
	h_0 \ge \frac{\ln \ell}{\ln t} - 1 \ge \frac{\ln \ell}{2 \ln t}.
\]
We split the sum into the vectors that contain less than $h_0$ ones and the rest. Then the number of ones is at least
\[
	N := \sum_{h=0}^{h_0-1} {t \choose h} h + \left( \ell - \sum_{h=0}^{h_0-1} {t \choose h} \right) h_0.
\]
We now distinguish cases. If 
\[
	\ell - \sum_{h=0}^{h_0-1}\ {t \choose h} \ge \ell/2
\]
then $N \ge (\ell/2) h_0$. Otherwise there are more than $\ell/2$ vectors with less than $h_0$ ones and we lower bound $N$ by the first term. There are two subcases. If
$h_0 \le t/3$ and hence for all $1\leq h \leq h_0$
\[
	{t \choose h}/{t \choose h-1} = \frac{t+1}{h} -1 \ge 2 ,
\] 
we have 
\begin{align*}
	\ell/2 &\le \sum_{h=0}^{h_0-1}{t \choose h} \le  {t \choose h_0 - 1} \sum_{j \ge 0} 2^{-j}\\
		&= 2 {t \choose h_0 - 1}, 
\end{align*}
and hence 
\begin{align*}
	N &\ge \sum_{h=0}^{h_0-1} {t \choose h} h \geq {t \choose h_0-1} (h_0-1)\\
	  &\geq \frac{\ell}{4} (h_0 - 1)\\
	  &\geq \frac{\ell \ln \ell}{16 \ln t}.
\end{align*}
Finally, if $h_0 > t/3$, we bound
\begin{align*}
\sum_{h=0}^{t/4-1}\! { t \choose h}\! &\leq 2{t \choose \frac t4-1} \leq \frac{1}{2} {t \choose \frac t4+1}\\
	&\leq \frac{1}{2} \sum_{h=t/4}^{h_0-1} {t \choose h}
\end{align*}
and hence  
\begin{align*}
	\sum_{h=t/4}^{h_0-1} {t \choose h} &= \sum_{h=0}^{h_0-1} {t \choose h} - \sum_{h=0}^{t/4-1} {t \choose h}\\
		&\geq \frac{1}{2} \sum_{h=0}^{h_0-1} {t \choose h}\\
		&\geq \ell/4.
\end{align*} 
Therefore
\[
	N \ge \frac \ell 4 \cdot \frac t4 \ge \frac{\ell h_0}{16} \geq \frac{\ell \ln \ell}{32 \ln t}. 
\]
\end{proof} 

We now consider a star with $\ell$ leaves and diameter $2 d$. Let
\[
 	\mathcal{P} = \left\{P^{(1)},\ldots,P^{(t)}\right\}
\] 
be the family of categories containing the center. We will lower bound $t$ using Lemma \ref{lem:sum_sk}. 

Every category containing the center corresponds to a partition 
\[
	P = (P_0,P_1,\ldots, P_d)
\]
of $\{1,\ldots,\ell\}$, where the set $P_k$ contains
the leaves for which a path of length $k$ from the center to the leaf is
contained in the category. 

\newcommand{\calP}{\mathcal{P}}

\begin{lemma}\label{lem:partition_distinct} For every $i$ and $j$ with $1 \le i , j \le \ell$, $i \ne j$, and every $k$, $0 \le k < d$, there must be a $P \in \calP$ such that 
\[    
	(i \in P_k) \neq (j \in P_k).
\]
\end{lemma}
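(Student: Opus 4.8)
The plan is to fix the target $t$ to be leaf $j$ and to argue that greedy routing forces $\cdist(\cdot,t)$ to strictly decrease as one moves from a leaf $i \neq j$ inward toward the center. Write $c = v^{(i)}_0, v^{(i)}_1, \ldots, v^{(i)}_d$ for the vertices on the arm to leaf $i$, so that $v^{(i)}_d$ is leaf $i$ and $v^{(i)}_m$ has neighbours $v^{(i)}_{m-1}$ and $v^{(i)}_{m+1}$ for $0 < m < d$. I would prove, by downward induction on $k$ from $d-1$ to $0$, that $\cdist(v^{(i)}_k, t) < \cdist(v^{(i)}_{k+1}, t)$. The base case $k = d-1$ is immediate: the leaf $v^{(i)}_d$ has $v^{(i)}_{d-1}$ as its only neighbour, so greedy routing toward $t$ (and $v^{(i)}_d \neq t$ since $i \neq j$) requires it to be strictly closer. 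For the step, at $v^{(i)}_{k+1}$ the induction hypothesis applied to index $k+1$ gives $\cdist(v^{(i)}_{k+2}, t) > \cdist(v^{(i)}_{k+1}, t)$, so the outward neighbour cannot provide progress; the only remaining neighbour $v^{(i)}_k$ must therefore be strictly closer.

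Next I would translate this strict inequality into a single witnessing category. Comparing the sets $\cat(t) \setminus \cat(v^{(i)}_k)$ and $\cat(t) \setminus \cat(v^{(i)}_{k+1})$, I note that any category $C \ni t$ containing $v^{(i)}_{k+1}$ but not $v^{(i)}_k$ is impossible: $C$ is connected, and the unique path from $t$ (on arm $j$) to $v^{(i)}_{k+1}$ (on arm $i$) passes through $v^{(i)}_k$, forcing $v^{(i)}_k \in C$. Hence the two sets differ only by categories that contain $v^{(i)}_k$ but not $v^{(i)}_{k+1}$, and the strict inequality $\cdist(v^{(i)}_k,t) < \cdist(v^{(i)}_{k+1},t)$ produces at least one category $C$ with $t \in C$, $v^{(i)}_k \in C$, and $v^{(i)}_{k+1} \notin C$.

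Finally I would read off the partition $P$ associated with $C$. Since $C$ is connected and contains $t$, which lies on arm $j$, it contains the center $c$ and, on each arm, an initial path starting at $c$. On arm $i$ it contains $v^{(i)}_k$ but not $v^{(i)}_{k+1}$, so it reaches exactly depth $k$ there, that is, $i \in P_k$; on arm $j$ it contains the leaf $t = v^{(j)}_d$, so $j \in P_d$. As $k < d$ and each leaf lies in exactly one part of the partition, $j \notin P_k$, whence $(i \in P_k) \neq (j \in P_k)$, which is exactly the assertion of the lemma.

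The main obstacle is the first paragraph: one must rule out that progress toward $t$ could be made by stepping outward along arm $i$, which is precisely what the downward induction secures. Once $\cdist(\cdot,t)$ is known to decrease toward the center, the remaining connectivity bookkeeping that pins the witnessing category's depth on arm $i$ to exactly $k$ is routine.
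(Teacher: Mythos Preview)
Your proof is correct and follows the same idea as the paper's, which compresses everything into one sentence: ``There must be a $P$ such that $i \in P_k$ and $j \in P_d$ as otherwise we cannot route from leaf $i$ to leaf $j$; since $P$ is a partition, $j \notin P_k$.'' Your downward induction showing that greedy routing on arm $i$ must step strictly inward, together with the connectivity argument pinning the witnessing category to depth exactly $k$ on arm $i$ and depth $d$ on arm $j$, is precisely the justification the paper leaves implicit.
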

\begin{proof} There must be a $P$ such that $i \in P_k$ and $j \in P_d$ as
otherwise we cannot route from leaf $i$ to leaf $j$. Since $P$ is a partition, $j \not\in
P_k$. \end{proof}

\begin{lemma}\label{lower bound lemma} Let $t$ be the number of categories containing the center in a star with $\ell$ leaves and diameter $2 d$. Then
\[
	t \ln t \ge \frac{d \ln \ell}{32}.
\]\end{lemma}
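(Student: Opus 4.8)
The plan is to turn the statement into $d$ separate instances of the rectangle covering problem of Lemma~\ref{lem:sum_sk}, one for each distance level $k$, and then add up the resulting bounds. Recall that the $t$ center-containing categories $\calP = \{P^{(1)},\ldots,P^{(t)}\}$ each correspond to a partition $P^{(m)} = (P^{(m)}_0,\ldots,P^{(m)}_d)$ of the $\ell$ leaves. I would exploit exactly this partition structure: for a fixed level $k$ with $0 \le k < d$, the sets $P^{(m)}_k$ will play the role of the $S_k$'s in the covering problem.

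First I would fix such a $k$ and, for each $m$, form the rectangle $S^{(k)}_m \times T^{(k)}_m$ with $S^{(k)}_m = P^{(m)}_k$ and $T^{(k)}_m = \{1,\ldots,\ell\} \setminus P^{(m)}_k$. These are legal rectangles since $S^{(k)}_m \cap T^{(k)}_m = \emptyset$ by construction. The covering property is precisely what Lemma~\ref{lem:partition_distinct} supplies (in the oriented form established in its proof): for every ordered pair $(i,j) \in Q$ there is an $m$ with $i \in P^{(m)}_k$ and $j \in P^{(m)}_d$, hence $i \in S^{(k)}_m$ and $j \in T^{(k)}_m$. Thus $\{S^{(k)}_m \times T^{(k)}_m\}_{m=1}^{t}$ covers $Q$ with $t$ rectangles.

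Assuming $4 \ln t \le \ln \ell$, Lemma~\ref{lem:sum_sk} then yields $\sum_m \abs{P^{(m)}_k} = \sum_m \abs{S^{(k)}_m} \ge \ell \ln \ell / (32 \ln t)$ for every $k$. Summing over $k = 0,\ldots,d-1$ and exchanging the order of summation, the left-hand side is $\sum_m \sum_{k=0}^{d-1} \abs{P^{(m)}_k} \le \sum_m \sum_{k=0}^{d} \abs{P^{(m)}_k} = t\ell$, where the final equality uses that each $P^{(m)}$ partitions the $\ell$ leaves. Comparing this with the accumulated lower bound $d\ell \ln \ell / (32 \ln t)$ and dividing by $\ell$ gives $t \ge d \ln \ell / (32 \ln t)$, which rearranges to the claimed $t \ln t \ge d \ln \ell / 32$.

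The remaining point is the regime where the hypothesis $4\ln t \le \ln \ell$ of Lemma~\ref{lem:sum_sk} fails. Here I would invoke the elementary bound $t \ge d$: routing from a leaf to the center spans graph-distance $d$, so by the argument of Lemma~\ref{lem:path_lower_bound} we have $\cdist \ge d$ along that route, forcing at least $d$ categories to contain the center. When $4\ln t > \ln \ell$ this gives $d \ln \ell / 32 < d \ln t / 8 \le t \ln t$, so the claim holds in this case as well. I expect the main obstacle to be setting up the reduction correctly — choosing $S^{(k)}_m = P^{(m)}_k$ and confirming that Lemma~\ref{lem:partition_distinct} delivers the covering in the right orientation, together with the bookkeeping that the partition identity $\sum_{k=0}^{d}\abs{P^{(m)}_k}=\ell$ is what keeps the left-hand side at $t\ell$ — rather than the final arithmetic, which is routine once the per-level bounds are assembled.
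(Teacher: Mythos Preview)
Your proof is correct and follows essentially the same approach as the paper: apply Lemma~\ref{lem:sum_sk} at each level $k$ to get $\sum_m |P^{(m)}_k| \ge \ell\ln\ell/(32\ln t)$, sum over $k$, and bound the left side by $t\ell$ via the partition identity. You are in fact more careful than the paper, both in explicitly exhibiting the rectangle covering so that the \emph{statement} of Lemma~\ref{lem:sum_sk} applies (the paper invokes it via the distinctness of the vectors, which is really the content of its proof), and in separately disposing of the regime $4\ln t > \ln\ell$ using $t \ge d$, which the paper silently ignores.
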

\begin{proof}
This proof uses a double counting technique. For each $1\leq i \leq \ell$ and $k$, define the vector $v_{i,j}$ by
\[ 
	(v_{i,k})_j = (i \in (P^{(j)})_k). 
\]
Then by Lemma~\ref{lem:partition_distinct} for every $k$ the vectors in $\{v_{i,k} \st 1 \le i \le \ell\}$ are
pairwise distinct, and, hence, by Lemma~\ref{lem:sum_sk} for every $k$, $0 \le k < d$, 
\[
	\sum_{1 \le j \le t} \abs{(P^{(j)})_k} \ge \frac{\ell \ln \ell}{32 \ln t}.
\]
Summation over $k$ yields
\begin{align*}   
	d \frac{\ell \ln \ell}{32 \ln t} &\le \sum_{k = 0}^{d-1}  \sum_{j=1}^{t} \abs{(P^{(j)})_k}\\
							   &= \sum_{j=1}^{t} \sum_{k=0}^{d-1}\abs{(P^{(j)})_k}\\
							   &\le t \ell. && \text{\qed}
\end{align*}
\end{proof}

\begin{theorem} Consider a star with $\ell$ leaves and diameter $2 d$, and let $t$ be the number of categories that contain the center. If $\ell \ge 3$, 
\begin{align*}
	t &\ge \frac{d \ln \ell}{32 (\ln d + \lnln \ell)}  \\
	& = \Omega\left( \frac{d \log (n/d)}{\log( d \log(n/d) )} \right).
\end{align*}
\end{theorem}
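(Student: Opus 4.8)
The plan is to start from Lemma~\ref{lower bound lemma}, which already gives us the inequality $t \ln t \ge \frac{d \ln \ell}{32}$, and simply invert it to obtain an explicit lower bound on $t$. The entire theorem is really just a cleanup of that implicit bound into a closed form, so the work is elementary but requires care in the estimates. First I would note that from $t \ln t \ge \frac{d \ln \ell}{32}$ it suffices to produce an upper bound on $\ln t$ in terms of $d$ and $\ell$, since then dividing gives $t \ge \frac{d \ln \ell}{32 \ln t}$ with a concrete denominator.

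The key step is to bound $\ln t$ from above. Since $t$ counts categories containing the center, and the center belongs to at most $n = 1 + \ell d$ categories, we trivially have $t \le n$, but that is too crude. A cleaner route: from $t \ln t \ge \frac{d \ln \ell}{32}$ one sees that $t$ cannot be astronomically large relative to $d \ln \ell$ in the regime we care about — but for an \emph{upper} bound on $\ln t$ I would instead use the crude observation that $t \le \ell d$ (the total number of categories is at most polynomial in $\ell$ and $d$), hence $\ln t \le \ln(\ell d) = \ln \ell + \ln d$. Combined with the target form $\ln d + \lnln \ell$, I would argue that $\ln \ell + \ln d$ is dominated (up to constant factors absorbable into the $\Omega$) by the claimed denominator once we substitute $t \ge \frac{d \ln \ell}{32 \ln t}$ back in; more precisely, substituting the bound $\ln t \le \ln d + \ln\ln\ell + O(1)$ is what yields the stated form, and the $\ln \ell$ term in the naive bound must be shown to be swallowed. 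The honest move is: plug $t \ge \frac{d\ln\ell}{32\ln t}$, so $\ln t \ge \ln(d \ln \ell) - \ln(32 \ln t)$, and also $\ln t \le \ln(d \ell)$; iterating once shows $\ln t = \Theta(\ln d + \lnln \ell)$ in the relevant range, giving
\[
	t \ge \frac{d \ln \ell}{32(\ln d + \lnln \ell)}.
\]

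The main obstacle is verifying that $\ln t \le O(\ln d + \lnln \ell)$ rather than the naive $\ln t \le \ln d + \ln \ell$; the point is that a \emph{good} system need not waste categories, and one can bound the number of center-categories by the upper-bound construction of Lemma~\ref{lem:long_star_upper_bound}, which uses only $\Oh(d \log(n/d))$ categories on the center. Taking logarithms of that bound gives $\ln t \le O(\ln d + \lnln \ell)$, since $\log(d \log(n/d))$ is of exactly that order. This is the substantive link: the lower bound on $t$ becomes tight only because the \emph{achievable} $t$ is small, so the upper-bound lemma controls $\ln t$ in the denominator. Once that is in place the asymptotic rewriting into $\Omega\!\left( \frac{d \log(n/d)}{\log(d \log(n/d))} \right)$ follows by substituting $\ell = (n-1)/d$, so that $\ln \ell = \Theta(\log(n/d))$, and the hypothesis $\ell \ge 3$ guarantees $\ln \ell$ and $\lnln \ell$ are positive so the denominator is well-defined and the bound is non-vacuous.
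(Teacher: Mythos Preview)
There is a genuine gap. Your main argument relies on bounding $\ln t$ from above by invoking the upper-bound construction of Lemma~\ref{lem:long_star_upper_bound}, but this is illegitimate: the theorem concerns an \emph{arbitrary} good system of categories, and $t$ is the number of center-categories in that arbitrary system. Nothing prevents a good system from being wasteful and having $t$ far larger than $\Oh(d\log(n/d))$; the existence of an efficient construction elsewhere places no constraint on the system under consideration. So the step ``one can bound the number of center-categories by the upper-bound construction'' simply does not apply, and without it your division $t \ge \frac{d\ln\ell}{32\ln t}$ has no usable denominator.

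The fix is much simpler than what you attempted, and it is what the paper does. Set $X = d\ln\ell$, so that $\ln X = \ln d + \lnln\ell$, and suppose for contradiction that $t < \frac{X}{32\ln X}$. Since $t\mapsto t\ln t$ is increasing, Lemma~\ref{lower bound lemma} then forces
\[
  \frac{X}{32\ln X}\,\ln\!\left(\frac{X}{32\ln X}\right) \;>\; t\ln t \;\ge\; \frac{X}{32},
\]
hence $\ln\!\bigl(\tfrac{X}{32\ln X}\bigr) \ge \ln X$, i.e.\ $\tfrac{X}{32\ln X} \ge X$, which is absurd once $\ell\ge 3$ (so $X>1$ and $\ln X>0$). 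No external upper bound on $t$ is needed: either $t$ is already at least $X$ (and we are done trivially), or $t<X$ so $\ln t<\ln X$ and the division goes through. Your ``iterating once'' remark gestures at this, but you never actually carry out the case split and instead retreat to the invalid appeal to the upper-bound lemma.
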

\begin{proof} Assume otherwise, and let $X = d \ln \ell$. Since $t \ln t$ is an
increasing function in $t$, Lemma~\ref{lower bound lemma} implies
\[   
	\frac{X}{32 \ln X} \ln\left(\frac{X}{32 \ln X}\right) \ge \frac{X}{32}
\]
and hence 
\[   
	\ln\left(\frac{X}{32 \ln X}\right) \ge  \ln X, 
\]
a contradiction. \end{proof}

\section{A Universal Lower Bound}\label{sec:general_lower_bound}

The lower bounds of the preceding section are existential. We showed the
existence of graphs for which every good system of categories has a certain
membership dimension. The lower bound almost matches the universal upper bound
of Sect.~\ref{sec:improved_upper_bound}. In this section, we show a universal
lower bound: every good system of categories for a graph $G$ has membership dimension
\[   \Omega\left(\diam(G) + \frac{\log \abs{V}}{\log \delta}\right),\]
where $\delta$ is the average degree. We also show that this bound is best
possible. 

\begin{theorem}\label{lem:general_case} In a graph $G=(V,E)$ with $n$ nodes and average degree
$\delta/2$, there is a node that is contained in 
\[
\Omega(\diam(G) + \log n/\log \delta)
\] 
categories.
\end{theorem}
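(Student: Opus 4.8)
The bound is a sum of two terms, so it suffices to exhibit in each case a node lying in $\Omega(\diam(G))$ respectively $\Omega(\log n/\log\delta)$ categories and then keep the larger. The first term is essentially free: the proof of Lemma~\ref{lem:path_lower_bound} already shows that the far endpoint of a diametral pair lies in at least $\diam(G)$ categories, so I would only have to establish the second term.

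For the term $\Omega(\log n/\log\delta)$ I would first record two structural facts that hold for every good system $\calS$ of membership dimension $m$. The map $v\mapsto\cat(v)$ is injective: if $\cat(s)=\cat(t)$ for $s\neq t$ then $\cdist(s,t)=0$, so no neighbour of $s$ is strictly closer to $t$ and routing $s\to t$ fails. Moreover, routing to a fixed target $t$ supplies at every node a neighbour with strictly smaller $\cdist(\cdot,t)$; following these choices yields a spanning tree rooted at $t$ in which $\cdist$ strictly decreases towards the root, so every node lies within graph-distance $|\cat(t)|\le m$ of $t$. Hence $G$ has radius at most $m$. I fix a centre $t_0$ whose ball of radius $m$ is all of $V$ and let $\mathcal T$ be the associated routing tree, of depth at most $m$, in which a node $a$ has at most $\deg(a)$ children.

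Weighting each node $v$ of $\mathcal T$ by $\prod_{a\prec v}(\#\text{children of }a)^{-1}$, the product running over the proper ancestors $a$ of $v$, makes the total weight of each level equal to one, so the weights sum to at most $m+1$; since $\#\text{children}\le\deg$, this gives the Moore-type estimate $\sum_{v}\prod_{a\prec v}\deg(a)^{-1}\le m+1$. If $\Delta$ denotes the maximum degree, then each summand is at least $\Delta^{-m}$, whence $n\le(m+1)\Delta^{m}$ and $m=\Omega(\log n/\log\Delta)$. This is, however, only the \emph{maximum}-degree bound, and it is genuinely too weak: a star has average degree $\Theta(1)$ (hence $\delta=\Theta(1)$) but $\Delta=n-1$, so this estimate says nothing while the true answer is $\Theta(\log n)$. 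Replacing $\Delta$ by the average degree is therefore the crux.

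To bridge this gap I would argue by a dichotomy driven by the average degree. The star phenomenon that defeats the ball-growth estimate is exactly the situation in which a high-degree node acts as a \emph{separator}, and such a node must itself carry a distinguishing cost: if a node $c$ separates $g$ vertices behind it so that every route between them is forced through $c$, then, exactly as in Lemma~\ref{lem:simple_star_lower_bound}, the membership vectors of those $g$ vertices restricted to $\cat(c)$ are pairwise distinct, forcing $|\cat(c)|\ge\log g$. When $G$ is sparse but has small diameter and many vertices, an edge count should force a bottleneck of degree $g=n^{\Omega(1/\log\delta)}$, so that $\log g=\Omega(\log n/\log\delta)$; when instead $G$ expands, its diameter is already $\Omega(\log n/\log\delta)$ and the first term suffices. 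I expect the main obstacle to be making this dichotomy quantitative: one must show that whenever the maximum-degree estimate above is lossy—because a few hubs absorb most of the degree budget—one of those hubs is a genuine routing bottleneck to which the anti-chain bound of Lemma~\ref{lem:simple_star_lower_bound} (or the rectangle-covering estimate of Lemma~\ref{lem:sum_sk}) applies with the required parameter. Balancing the loss in the ball-growth estimate against the gain from the separator bound should then reproduce the clean $\Omega(\log n/\log\delta)$ term.
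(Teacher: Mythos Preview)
Your proposal has a genuine gap that you yourself flag. Up to the Moore-type estimate $n\le(m+1)\Delta^m$ everything is fine, and you are right that this only yields $\Omega(\log n/\log\Delta)$. But the proposed remedy---a dichotomy between ``there is a hub that separates many vertices'' and ``the graph expands so its diameter is already large''---is not carried out, and it is not clear it can be. The separator step, as you phrase it, needs a vertex $c$ through which \emph{every} route between the $g$ vertices is forced; that is essentially a cut-vertex hypothesis, and high-degree hubs need not be cut vertices. Take the wheel on $n$ vertices: average degree is $\Theta(1)$, diameter is $2$, and the hub separates nothing, yet the theorem asserts $\Omega(\log n)$. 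Your edge-count heuristic does not produce a cut vertex here, and the ``expands'' branch does not fire either since the diameter is constant. So neither horn of the dichotomy applies, and the quantitative balancing you hope for does not get off the ground.

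The paper's argument avoids separators entirely and works from the \emph{low}-degree side. Because the average degree is $\delta/2$, at least $n/2$ vertices have degree below $\delta$, and among them one greedily finds an independent set $I$ of size $\Omega(n/\delta)$. For each $v\in I$ one records a pattern $p^{(v)}\in\{0,1,*\}^m$ with $p^{(v)}_i=1$ if $v\in C_i$, $p^{(v)}_i=0$ if $v\notin C_i$ but some neighbour of $v$ is in $C_i$, and $*$ otherwise. Greedy routing between two vertices of $I$ forces their patterns to differ at some coordinate where both are in $\{0,1\}$, so the regions $\{x\in\{0,1\}^m: x\text{ matches }p^{(v)}\}$ are pairwise disjoint. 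One then puts a biased product measure on $\{0,1\}^m$, weighting $1$ by $\alpha$ and $0$ by $1-\alpha$, and chooses $\alpha$ so that $\alpha=2^{-C}$ and $1-\alpha=2^{-C/\delta}$ with $C=\Theta(\log\delta)$. The measure of the region of $p^{(v)}$ is then $2^{-C\cdot c(v)}$ where $c(v)=t_1^{(v)}+t_0^{(v)}/\delta$, so disjointness over $|I|=\Omega(n/\delta)$ patterns forces some $c(v)=\Omega(\log n/\log\delta)$. Since each $1$ charges a category to $v$ and each $0$ charges a category to one of its $\le\delta$ neighbours, $c(v)/2$ lower-bounds $\max_{u\in\{v\}\cup N(v)}|\cat(u)|$, and the bound follows. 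This is the missing idea: the average-degree hypothesis is exploited not to find a hub but to find many low-degree vertices whose neighbourhood information can be packed into ternary patterns and measured.
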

\begin{proof} The lower bound of $\diam(G)$ was already established by Eppstein
et al., see Lemma \ref{lem:path_lower_bound}. We turn to the second bound. Since $G$ has average degree $\delta/2$, there are at least $n/2$ nodes that have degree less than $\delta$. Among these we can greedily find an independent set $I$ of size at least $\Omega(n/\delta)$. Consider the graph 
\[
	G'=\left(V(G), E(G) \cup {V(G)\setminus I \choose 2}\right),\]
i.\,e., $G$ where the subgraph outside of $I$ is augmented to a clique. Routing in $G'$ can only be easier than in $G$.
We show a lower bound for $G'$.

We want to show that either a node in $I$ or a node in its neighborhood 
\[
	N(I) = \{u \st v \in I \wedge \{u,v\}\in E\}
\] is in $\Omega(\log n/\log \delta)$ categories. Fix a system of categories
$(C_i)_{1\leq i \leq m}$ that allows greedy routing. For every node $v$ in $I
\cup N(I)$ define a pattern $p^{(v)}$ as 
\[
 \left(p^{(v)}\right)_i\! =\! \begin{cases}
            1 &\! v\in C_i\\
			0 &\! v\not \in C_i \wedge N(v) \cap C_i \neq \emptyset\\
			* &\! v\not \in C_i \wedge N(v) \cap C_i = \emptyset.
           \end{cases}
\]
We say two patterns $p^{(v)}$, $p^{(u)}$ \emph{match} if they agree on all
positions where both have a $1$ or $0$ (i.\,e., $*$ matches to
anything). Let $u$ and $v$ be distinct vertices. Greedy routing from $u$ to $v$ requires the existence of
a neighbor $z$ of $u$ and a category $C_i$ with $C_i \in (\cat(v) \cap \cat(z))
\setminus \cat(u)$. Then $p^{(v)}_i = 1$ and $p^{(u)}_i = 0$. Hence routing is only possible if no two patterns match. 

To each pattern we assign a region of points in the hypercube $\{0,1\}^m$,
namely the set of all matching bitstrings. Observe that two
patterns match iff their regions in the hypercube overlap. Intuitively, this
means that for allowing greedy routing we need to set many values in the
$p^{(v)}$ to 0 or 1 to make these regions small enough to accommodate all
without overlap. A high number of 1's forces nodes from $I$ into many
categories, a high number of 0's forces nodes from $N(I)$ into many categories.

To make this into a formal argument, consider the cost of a vertex $v \in I$ defined as
\[
 c(v) := t_1^{(v)} + \frac{1}{\delta} t_0^{(v)},
\]
where $t_k^{(v)}$ is the number of positions in the vector $p^{(v)}$ equal to
$k$. 
Note that $c(v)/2$ is a lower bound for 
\[
	\max_{u \in \{v\} \cup N(v)} |\cat(u)|,
\]
the maximal number of categories $u$ or one of its neighbors is in, since each
1 means an additional category for $v$ and each 0 an additional category for
one of its at most $\delta$ neighbors. 
In the remainder of the proof we show that there is a vertex $v \in I$ with
$c(v) = \Omega(\log n / \log \delta)$.  

For $\delta = 1$, an easy argument shows an $\Omega(\log n)$ lower bound. 
Define a measure on the hypercube $\{0,1\}^m$ as 
\[
 \mu(x) = 2^{-m},
\]
for $x \in \{0,1\}^m$, and by 
\[
	\mu(X) = \sum_{x \in X} \mu(x)
\]
for $X \subset \{0,1\}^m$. Then the total measure of the hypercube is 1. As the
regions defined by the patterns $p^{(v)}$ with $v \in I$ must be disjoint,
there must be a pattern that has measure $\mu(p^{(v)})$ no more than $1/n$. As
$\mu(p^{(v)}) = 2^{-c(v)}$ we get $c(v) \ge \log n$.  

The same argument can be applied if $\delta$ is bounded by some constant, as then
replacing $\delta$ by 1 in the definition of $c(v)$ does not change it
asymptotically. 

For $\delta$ greater than some large enough constant we change the measure on the hypercube to 
\[
 \hat \mu(x) = \alpha^{\sum x} \cdot (1-\alpha)^{m-\sum x},
\]
for an $0 < \alpha < 1$ that is to be determined, where $\sum x$ denotes the number of 1's in the bitstring $x \in \{0,1\}^m$ (and, thus, $m-\sum x$ the number of 0's). Again we sum up for subsets of the hypercube and the whole hypercube has measure 1. The parameter $\alpha$ will incorporate the reduced weight of 0's in the patterns. We want to choose $\alpha$ such that for some $C > 0$
\begin{align*}
   \alpha   &= 2^{-C}\\
(1-\alpha)  &= 2^{-C/\delta},
\end{align*}
or, equivalently, we want to find a $C$ such that
\begin{align} \label{eq:defc}
 1 = 2^{-C} + 2^{-C/\delta},
\end{align}
as then again $\mu(p^{(v)}) = 2^{-C \cdot c(v)}$. Now, we argue as in the $\delta = 1$ case. Since the patterns may not overlap, there has to be a pattern with measure $\mu(p^{(v)})$ at most $1/n$. As $\mu(p^{(v)}) = 2^{-C \cdot c(v)}$ we get $c(v) \ge \log n/C$, which shows the claim assuming $C = O(\log \delta)$. It remains to show the latter.

Unfortunately, we cannot solve \eqref{eq:defc} exactly for $C$. However,  
for $C = O(\delta)$ we have $2^{-C/\delta} = 1 - \Theta(C/\delta)$,
and, thus, for $C = \log \delta$
\[
  2^{-C} + 2^ {-C/\delta} = 1 + \frac{1 - \Theta(\log \delta)}{\delta} < 1,
\]
for $\delta$ large enough. On the other hand, for $C = \log \delta - 2 \log \log \delta$ we have
\[
  2^{-C}\! + 2^ {-C/\delta} = 1 + \frac{\log^{2} \delta - \Theta(\log \delta)}{\delta} >\! 1,
\]
for $\delta$ large enough. Hence, there is a root $C = \log \delta - \Theta(\log \log \delta)$.
\end{proof}

\begin{figure}[t]
 \centering
 \includegraphics[width=0.5\linewidth]{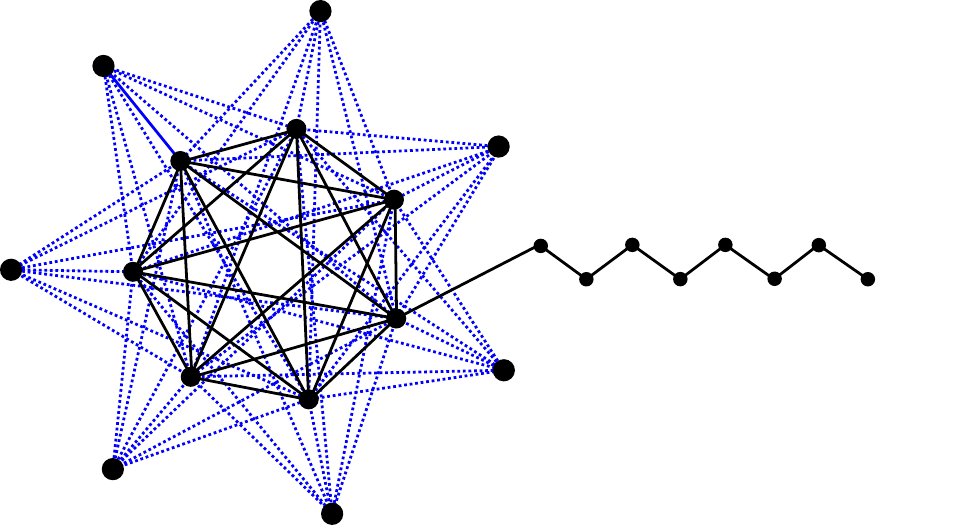}
 \caption{Tight example for Theorem~\ref{lem:general_case}}
 \label{fig:zauberstab}
\end{figure}

This bound is tight. For any triple $(n, \delta, \diam)$ of positive reals with
$1 \le \delta, \diam \le n$, we can construct a
tight example with $\Theta(n)$ vertices, average degree $\Theta(\delta)$, and
diameter $\Theta(\diam)$. For this construction take a $\delta$-Clique
$K_\delta$ and connect a set $O$ of $n$ additional nodes to all nodes in
$K_\delta$. Add a path $P$ of length $\diam$. For an example see
Fig.~\ref{fig:zauberstab}. The graph $G$ thus constructed has the required
parameters. 

We construct a good system of categories for $G$ with a membership dimension matching the lower bound
of $\Theta(\diam(G)+\log(n)/\log(\delta))$ from
Theorem~\ref{lem:general_case}. Routing from the nodes on $P$ to all other
nodes is possible with $\diam(G)$ membership dimension as in the solution
for simple paths (see Fig.~\ref{fig:path_categories}). In the construction we
treat the nodes of $K_\delta$ as one, except for the node that is directly
connected to $P$, and similarly the nodes in $O$.  

To enable routing between the $n$ nodes in $O$, we generalize the construction from
Lemma~\ref{lem:simple_star_upper_bound}. 
We number the vertices in $O$ with base $\delta$ (more precisely,
$\max\{2,\lceil\delta\rceil\}$). Then any vertex $v \in O$ corresponds to a unique string $(b^{(v)}_0,\ldots,b^{(v)}_{k-1})$ of length $k = \log n / \log \delta$
with each $b_i \in \{0, \ldots, \delta-1\}$.
For each $0 \le i < k$ and $0 \le j < \delta$ we create a category
\[
    C_{i,j} := \{ v \in O \mid b^{(v)}_i = j \} \cup \{ u_j \},
\]
where $u_1,\ldots,u_\delta$ are the vertices of the clique $K_\delta$. 
Additionally we add a singleton category for each vertex. 

Now, every vertex in the clique and $O$ is in at most $k+1$ such categories.
Moreover, we can route between any pair $u,v \in O$, as there is an $i$ with 
$b^{(u)}_i \ne b^{(v)}_i$, so category $C_{i,b^{(v)}_i}$ allows to route from $u$
to the clique when having $v$ as target.


\section{Conclusion and Open Problems}

In this paper we presented an improved construction of systems of categories
$\calS$ that support greedy routing in general graphs $G$. The previous best result
uses a membership dimension of $\Oh((\diam(G)+\log \abs{V})^2)$, whereas our methods
show that  
\[
	\Oh(\diam(G)\log (2\abs{V}/\diam(G))
\]
is sufficient. Besides improved upper bounds we also show stronger lower bounds than previously known. Our results improve the lower bound from $\diam(G)$ to 
\[
 \Omega(\diam(G) + \log \abs{V}/ \log \delta)\]
for graphs of average degree $\delta$. This lower bound is tight for certain
graphs. For the restricted class of stars of diameter $2 d$ and $\ell$ leaves, having $n$ nodes, 
we showed a lower bound of 
\[\Omega\left(\frac{d\log (n/d)}{\log (d \log (n/d))}\right).\]
For the case of diameter 2 with $\ell$ leaves the stronger bound 
\[\log \ell + \left(\frac 12 + o(1)\right) \loglog \ell\]
holds.

These bounds are small enough to provide additional evidence that the category-based theory of message passing in social networks is correct.

Many open problems remain. Foremost, we conjecture that $\Oh(d\log(2n/d))$ is tight for stars of diameter $d$, but our lower bound is weaker. Moreover, the best upper bounds are achieved by taking a spanning tree and constructing categories for it. Is it possible to exploit the properties of the graph better than this?

Empirical studies show that social networks are graphs with a power-law degree distribution and a large clustering coefficient (see e.\,g.~\cite{Newman03thestructure} for an overview). The example graphs considered in this paper do not have these properties. It would therefore be interesting to study their consequences in the minimal membership dimension required for routing. Intuitively the membership dimension should also depend on the expansion properties of the graph.

From a sociological point of view it is also interesting to see how natural relaxations of the greedy rule, e.\,g., allowing nodes to choose a neighbor at random in case of distance ties, influence the bounds. 

\bibliographystyle{elsarticle-num}
\bibliography{bibliography}
\end{document}